\newtheorem{theorem}{Theorem}[section]
\newtheorem{lemma}[theorem]{Lemma}
\newtheorem*{lemmaNoNum}{Lemma}
\newtheorem{claim}[theorem]{Claim}
\newtheorem{definition}[theorem]{Definition}
\newtheorem{remark}[theorem]{Remark}
\newcommand{\N}{{\mathcal{N}}}
\newcommand{\eqdef}{\stackrel{\Delta}{=}}
\newcommand{\R}{{\mathbb{R}}}
\renewcommand{\S}{{\mathbb{S}}}
\newcommand{\E}{{\mathbb{E}}}
\newcommand{\eps}{\varepsilon}
\newcommand{\xii}{x^{(i)}}
\newcommand{\seta}{\mathcal{A}}
\newcommand{\setb}{\mathcal{B}}
\newcommand{\ip}[1]{\left\langle #1 \right\rangle}
\newcommand{\SSS}{\mathbb{S}}
\newcommand{\abs}[1]{\left| #1 \right|}
\newcommand{\mnote}[1]{ \marginpar{\tiny\bf
            \begin{minipage}[t]{0.5in}
              \raggedright #1
           \end{minipage}}}
\newcommand{\poly}{{\operatorname{poly}\xspace}}
\newcommand{\FHP}{$\operatorname{FHP}$\xspace}
\newcommand{\MMC}{$\operatorname{MMC}$\xspace}
\newcommand{\SVM}{$\operatorname{SVM}$\xspace}
\newcommand{\MAXTSAT}{$\operatorname{MAX-3SAT}$\xspace}
\newcommand{\MAXTSATt}[1]{$\operatorname{MAX-3SAT}(#1)$\xspace}
\newcommand{\SYM}{$\operatorname{SYM}$\xspace}
\newcommand{\SYMt}[1]{$\operatorname{SYM}(#1)$\xspace}
\date{\nonumber}
\begin{document}

\begin{titlepage}
\title{On the Furthest Hyperplane Problem\\ and Maximal Margin Clustering}

\author{
Zohar Karnin \thanks{Yahoo! Research {\tt zkarnin@yahoo-inc.com}.}
\and
Edo Liberty\thanks{Yahoo! Research, {\tt edo@yahoo-inc.com}.}
\and
Shachar Lovett \thanks{IAS, {\tt slovett@math.ias.edu}. Supported by DMS-0835373.}
\and
Roy Schwartz \thanks{Technion Institute of Technology and  Yahoo! Research  {\tt roys@yahoo-inc.com}.}
\and
Omri Weinstein\thanks{Princeton University and Yahoo! Research {\tt oweinste@cs.princeton.edu}.}
}

\maketitle


\begin{abstract}
This paper introduces the Furthest Hyperplane Problem (\FHP), which is an unsupervised counterpart of Support Vector Machines.
Given a set of $n$ points in $\R^d$, the objective is to produce the hyperplane (passing through
the origin) which maximizes the separation margin, that is, the minimal distance between the hyperplane and any input point.

%
To the best of our knowledge, this is the first paper achieving provable results regarding \FHP. We provide both lower and upper bounds to this NP-hard problem. 
First, we give a simple randomized algorithm whose running time is $n^{O(1/\theta^2)}$ where $\theta$ is the optimal separation margin. 
We show that its exponential dependency on $1/\theta^2$ is tight, up to sub-polynomial factors, assuming SAT cannot be solved in sub-exponential time. 
Next, we give an efficient approximation algorithm. 
For any $\alpha \in [0,1]$, the algorithm produces a hyperplane whose distance
from at least $1-5\alpha$ fraction of the points is at least $\alpha$ times the optimal separation margin. 
Finally, we show that \FHP does not admit a PTAS by presenting a gap preserving reduction from a particular version of the PCP theorem.

\end{abstract}

\end{titlepage}


\begin{section}{Introduction}
One of the most well known and studied objective functions in machine learning for obtaining
linear classifiers is the Support Vector Machines (\SVM) objective.
\SVM's are extremely well studied, both in theory and in practice.
We refer the reader to~\cite{VapLer63, Mangasarian65} and to~\cite{Burges98atutorial} for a thorough
survey and references therein. The simplest possible setup is the separable case:
given a set of $n$ points $\{\xii\}_{i=1}^n$ in $\R^d$ and labels $y_1,\ldots y_n \in \{1,-1\}$ find
hyperplane parameters $w \in \S^{d-1}$ and  $b \in \R$ which maximize $\theta'$ subject to
$(\langle w, x^{(i)} \rangle  + b)y_i \ge \theta'$. The intuition is that different concepts will be ``well
separated" from each other and that the best decision boundary is the one that maximizes the separation.
This intuition is supported by extensive research which is beyond the scope of this paper.
Algorithmically, the optimal solution for this problem can be obtained using Quadratic Programing or the
Ellipsoid Method in polynomial time. In cases where the problem has no feasible solution the constraints
must be made ``soft" and the optimization problem becomes significantly harder. This discussion,
however, also goes beyond the scope of this paper.


As a whole, \SVM's fall under the category of supervised learning although semi-supervised and
unsupervised versions have also been considered (see references below). We note that to the best of our knowledge the papers dealing with the unsupervised scenario were purely experimental and did not contain any rigorous proofs. In this model, the objective remains unchanged
but some (or possibly all) of the point labels are unknown. The maximization thus ranges
not only over the parameters $w$ and $b$ but also over the possible labels for the unlabeled points
$y_i \in \{1,-1\}$. The integer constraints on the values of $y_i$ make this problem significantly harder
than \SVM's.

In~\cite{Xu05maximummargin} Xu et al coin the name Maximal Margin Clustering (\MMC) for the case where
none of the labels are known. Indeed, in this setting the learning procedure behaves very much like clustering.
The objective is to assign the points to two groups (indicated by $y_i$) such that solving the
labeled \SVM problem according to this assignment produces the
maximal margin.\footnote{The assignment is required to label at least one point to each cluster to avoid a trivial unbounded margin.}
In~\cite{Bennett98semi-supervisedsupport} Bennett and Demiriz propose to solve the resulting
mixed integer quadratic program directly using general solvers and give some  
encouraging experimental results. De Bie et al~\cite{Bie03convexmethods} and Xu et al~\cite{Xu05maximummargin}
suggest an SDP relaxation approach and show that it works well in practice.
Joachims in~\cite{Joachims:1999} suggests a local search approach which iteratively improves on a current best solution.
While the above algorithms produce good results in practice, their analysis does not guaranty the optimality
of the solution. Moreover, the authors of these papers state their belief that the non convexity of this
problem makes it hard but to the best of our knowledge no proof of this was given.

\FHP is very similar to unsupervised \SVM or Maximum Margin Clustering.
The only difference is that the solution hyperplane is constrained to pass through the origin.
More formally, given $n$ points $\{\xii\}_{i=1}^n$ in a $d$-dimensional Euclidean space, \FHP is defined as follows:
\begin{eqnarray}\label{form1}
\textrm{Maximize}\quad    \theta' &   &    \nonumber \\
\textrm{s.t}\quad  \|w\|^2    & =& 1   \nonumber \\
	\forall \; 1\leq i \leq n \;\;\; |\langle w\cdot \xii \rangle |     & \geq & \theta'
\end{eqnarray}

The labels in this formulation are given by $y_i = sign(\langle w\cdot \xii \rangle)$ which can be viewed
as the ``side'' of the hyperplane to which $x^{(i)}$ belongs. At first glance, \MMC appears to be harder
than \FHP since it optimizes over a larger set of possible solutions. Namely, those for which $b$ (the
hyperplane offset) is not necessarily zero.
We claim however that any \MMC problem can be solved using at most ${n \choose 2}$ invocations of \FHP.
The observation is that any optimal solution for \MMC must have two equally distant points in opposite sides
of the hyperplane. Therefore, there always are at least two points $i$ and $j$ such that
$(\langle w , x^{(i)} \rangle  + b) = - (\langle w , x^{(j)} \rangle + b)$.
This means that the optimal hyperplane obtained by \MMC must pass through the point $(x^{(i)}+x^{(j)})/2$.
 Therefore, solving \FHP centered at $(x^{(i)}+x^{(j)})/2$ will
yield the same hyperplane as \MMC and iterating over all pairs of points concludes the observation.
From this point on we explore \FHP exclusively but the reader should keep in mind that any algorithmic claim
made for \FHP holds also for \MMC due to the above.


\begin{subsection}{Results and techniques}

In Section~\ref{edo_was_here} we begin by describing three exact (yet exponential) algorithms for \FHP.
These turn out to be preferable to one another for different problem parameters.
The first is a brute force search through all feasible labelings which runs in time $n^{O(d)}$.
The second looks for a solution by enumerating over an $\eps$-net of the $d$-dimensional unit sphere and requires $(1/\theta)^{O(d)}$ operations.
The last generates solutions created by random unit vectors and can be shown to find the right solution after $n^{O(1/\theta^2)}$ tries (w.h.p.).
While algorithmically the random hyperplane algorithm is the simplest, its analysis is the most complex.
Assuming a large constant margin, which is not unrealistic in machine learning applications, it provides the first polynomial time solution to \FHP.
Unfortunately, due to the hardness result below, its exponential dependency on $\theta$ cannot be improved.


In section~\ref{sec_approx} we show that if one is allowed to discard a small fraction of the points then much better results can be obtained. We note that in the perspective of machine learning, a hyperplane that separates almost all of the points still provides a meaningful result (see the discussion at the end of section~\ref{sec_approx}) \mnote{I added a small motivation here - see if you're ok with it since I have mixed feelings about it.}. We give an efficient algorithm which finds a hyperplane whose distance
from at least $1-5\alpha$ fraction of the points is at least $\alpha \theta$ , 
where $\alpha \in [0,1]$ is any constant and $\theta$ is the optimal margin of the original problem.
The main idea is to first find a small set of solutions which perform well `on average'. 
These solutions are the singular vectors of row reweighed versions of a matrix containing the input points.
We then randomly combine those to a single solutions. 


In section \ref{sec_hard} we prove that \FHP is NP-hard to approximate to
within a small multiplicative constant factor, ruling out a PTAS.
We present a two-step gap preserving reduction from MAX-3SAT using a particular version of the PCP theorem~\cite{arora_pcp}.
It shows that the problem is hard even when the number of points is linear in the dimension and when all the points have approximately the same norm.
As a corollary of the hardness result we get that the running time of our exact solution algorithm is, in a sense, optimal.
There cannot be an algorithm solving \FHP in time $n^{O(1/\theta^{2-\eps})}$ for any constant $\eps>0$, unless SAT admits a sub-exponential time algorithm.


\end{subsection}

\end{section} 


\begin{section}{Preliminaries and notations}

The set $\{\xii\}_{i=1}^n$ of input points for \FHP is assumed to lie in a
Eucledean space $\R^d$, endowed with the standard inner product denoted by $\ip{\cdot,\cdot}$.
Unless stated otherwise, we denote by $\|\cdot\|$ the $\ell_2$ norm.
Throughout the paper we let $\theta$ denote the solution of the optimization problem defined in
Equation~(\ref{form1}). The parameter $\theta$ is also referred to as ``the margin'' of  $\{\xii\}_{i=1}^n$
or simple ``the margin'' when it is obvious to which set of points it refers to. Unless stated otherwise,
we consider only hyperplanes which pass through the origin. They are defined by their normal vector
$w$ and include all points $x$ for which $\langle w,  x \rangle = 0$. By a slight abuse of notation, we
usually refer to a hyperplane by its defining normal vector $w$. Due to the scaling invariance of this
problem we assume w.l.o.g. that $\|\xii\| \leq 1$. One convenient consequence of this assumption is
that $\theta \leq 1$. We denote by $\mathcal{N}(\mu,\sigma)$ the standard Normal distribution with
mean $\mu$ and standard deviation $\sigma$. Unless stated otherwise, $\log()$ functions are base 2.
%
%
\begin{definition}[Labeling, feasible labeling] \label{def:labeling}
We refer to any assignment of $y_1,\ldots,y_n \in \{1,-1\}$ as a {\it labeling}. We say that a labeling is
feasible if there exists $w \in \S^{d-1}$ such that $\forall i \;  y_i\ip{w,x^{(i)}} > 0$.
Complementary, for any hyperplane $w \in \S^{d-1}$ we define its labeling as $y_i = sign(\ip{w,x^{(i)}})$.
\end{definition}

\begin{definition}[Labeling margin]
The margin of a feasible labeling is the margin obtained by solving \SVM on $\{\xii\}_{i=1}^n$ using
the corresponding labels but constraining the hyperplane to pass through the origin.
This problem is polynomial time solvable by Quadratic Programing or by the Ellipsoid Method~\cite{QP}.
We say a feasible labeling is optimal if it obtains the maximal margin.
\end{definition}
%
%
\begin{definition}[Expander Graphs]\label{expanders} An undirected graph $G = (V, E)$ is called an
$(n, d, \tau)$-expander if $|V | = n$, the degree of each node is $d$, and its edge expansion $h(G) =
\min_{|S| < n/2} (|E(S,S^c)|)/|S|$ is at least $\tau$. By Cheeger's inequality \cite{Cheeger}, $h(G) \geq
(d-\lambda)/2$, where $\lambda$ is the second largest eigenvalue, in absolute value, of the adjacency
matrix of $G$. For every $d = p + 1 \ge 14$, where $p$ is a prime congruent to $1$ modulo $4$, there
are explicit constructions of $(n, d, \tau)$-expanders with $\tau > d/5$ for infinitely many $n$. This is due
to the fact that these graphs exhibit $\lambda \leq 2\sqrt{d-1}$ (see~\cite{Ram}), and hence by the above
$h(G) \geq (d - 2\sqrt{d-1})/2 > d/5$ (say) for $d\geq 14$. Expander graphs will play a central role in
the construction of our hardness result in section \ref{sec_hard}.
\end{definition}

\end{section}


\begin{section}{Exact algorithms}\label{edo_was_here}


\begin{subsection}{Enumeration of feasible labelings}\label{enum}
The most straightforward algorithm for this problem enumerates over all feasible labelings
of the points and outputs the one maximizing the margin.
Note that there are at most $n^{d+1}$ different feasible labelings to consider.
This is due to Sauer's Lemma~\cite{Sauer1972145} and the fact that the VC dimension
of hyperplanes in $\R^d$ is $d+1$.\footnote{Sauer's Lemma~\cite{Sauer1972145} states that the number of possible
feasible labelings of $n$ data points by a classifier with VC dimension $d_{VC}$ is bounded by $n^{d_{VC}}$.}
This enumeration can be achieved by a Breadth First Search (BFS) on the graph $G(Y,E)$ of feasible labelings.
Every node in the graph $G$ is a feasible labeling ($|Y| \le n^{d+1}$)
and two nodes are connected by an edge iff their corresponding labelings defer by at most one point label.
Thus, the maximal degree in the graph is $n$ and the number of edges in this graph is at most $|E| \le |Y|n \le n^{d+2}$.
Moreover, computing for each node its neighbors list can be done efficiently since we only need to
check the feasibility (linear separability) of at most $n$ labelings.
Performing BFS thus requires at most $O(|Y| \poly(n,d) + |E|\log(|E|)) = n^{d + O(1)}$.
The only non trivial observation is that the graph $G$ is connected. To see this, consider the path from
a labeling $y$ to a labeling $y'$. This path exists since it is achieved
by rotating a hyperplane corresponding to $y$ to one corresponding to $y'$.
By an infinitesimal perturbation on the point set (which does not effect any feasible labeling)
we get that this rotation encounters only one point at a time and constitutes a path in $G$.
To conclude, there is a simple enumeration procedure for all $n^{d+1}$ linearly separable labelings which
runs in time $n^{d+O(1)}$.
\end{subsection}


\begin{subsection}{An $\eps$-net algorithm}\label{eps_net_alg}
The second approach is to search through a large enough set of hyperplanes and measure the margins produced by the labelings they induce.
Note that it is enough to find one hyperplane which obtains the same labels the optimal margin one does.
This is because having the labels suffices for solving the labeled problem and obtaining the optimal hyperplane.
We observe that the correct labeling is obtained by any hyperplane $w$ whose distance from the optimal one is $\|w - w^*\| < \theta$.
To see this, let $y^{*}$ denote the correct optimal labeling
$y^{*}_i\langle w, x^{(i)} \rangle = \langle w^*, y^{*}_i x^{(i)} \rangle + \langle w - w^*, y^{*}_i x^{(i)} \rangle  \ge \theta - \|w - w^*\|\cdot \|x^{(i)}\| > 0$.
It is therefore enough to consider hyperplane normals $w$ which belong to an $\eps$-net on the
sphere $\S^{d-1}$ with $\eps < \theta$. Deterministic constructions of such nets exist with size
$(1/\theta)^{O(d)}$ \cite{eps_net}.
Enumerating all the points on the net produces an algorithm which runs in time $O((1/\theta)^{O(d)} \poly(n,d))$.\footnote{
This procedure assumes the margin $\theta$ is known. This assumption can be removed by a standard doubling
argument.}
\end{subsection}


\begin{subsection}{Random Hyperplane Algorithm}\label{rha}
Both algorithms above are exponential in the dimension, even when the margin $\theta$ is large.
A first attempt at taking advantage of the large margin uses dimension reduction.
An easy corollary of the well known Johnson-Lindenstrauss lemma
yields that randomly projecting of the data points into dimension $O(\log(n)/\theta^2)$ preserves the margin up to a constant.
Then, applying the $\eps$-net algorithm on the reduced space requires only $n^{O(\log(1/\theta)/\theta^2 )}$ operations.
Similar ideas were introduced in \cite{Vempala} and subsequently used in \cite{Klivans04,Harpeled06,Balcan}.
%
%
%
It turns out, however, that a simpler approach improves on this.
Namely, pick $n^{O(1/\theta^{2})}$ unit vectors $w$ uniformly at random
from the unit sphere. Output the labeling induced by one of those vectors which maximizes the margin.
To establish the correctness of this algorithm it suffices to show that a random hyperplane induces
the optimal labeling with a large enough probability.
\begin{lemma} \label{lem:random FHP}
Let $w^*$ and $y^*$ denote the optimal solution of margin of $\theta$ and the labeling it induces.
Let $y$ be the labeling induced by a random hyperplane $w$.
The probability that  $y = y*$ is at least $n^{-O(1/\theta^{2})}$.
\end{lemma}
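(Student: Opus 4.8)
The plan is to show that a random unit vector $w \in \S^{d-1}$ reproduces the optimal labeling $y^*$ with probability at least $n^{-O(1/\theta^2)}$, by reducing the event $\{y = y^*\}$ to a union (or intersection) of simple half-space events and bounding the probability that $w$ lands in the appropriate cone. First I would note that, by the argument already given in Section~\ref{eps_net_alg}, it suffices to show that $w$ lands within Euclidean distance $\theta$ of $w^*$ with the claimed probability --- because any such $w$ induces exactly $y^*$. So the core task becomes: estimate $\Pr_{w \sim \S^{d-1}}[\|w - w^*\| < \theta]$, i.e. the normalized measure of a spherical cap of angular radius $\approx \theta$ around $w^*$. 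This is a classical quantity, roughly $(\sin \alpha)^{d}$ for a cap of angular radius $\alpha$, which for fixed $\theta$ is exponentially small in $d$ --- and that is \emph{too weak}, since $d$ can be large. So the naive ``distance to $w^*$'' bound does not by itself give an $n$-dependent (rather than $d$-dependent) bound, and we need to exploit that we only need the \emph{labeling} to agree, not geometric proximity.

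The better route, and the one I expect the paper takes, is to work directly with the signs. Write $z^{(i)} = y^*_i x^{(i)}$, so that $\langle w^*, z^{(i)}\rangle \ge \theta$ for all $i$, and we want $\Pr[\forall i : \langle w, z^{(i)}\rangle > 0]$. Decompose a random Gaussian vector $g \sim \mathcal{N}(0, I_d)$ (whose normalization is uniform on the sphere, so signs of inner products are identically distributed) as $g = \langle g, w^*\rangle w^* + g^\perp$ where $g^\perp$ is the component orthogonal to $w^*$. Condition on the scalar $t = \langle g, w^*\rangle$ being large, say $t \ge T$ for a threshold $T$ to be chosen of order $\sqrt{\log n}/\theta$; this happens with probability $\approx e^{-T^2/2} = n^{-O(1/\theta^2)}$, which already has the right shape. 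Given $t \ge T$, for each $i$ we have $\langle g, z^{(i)}\rangle = t\langle w^*, z^{(i)}\rangle + \langle g^\perp, z^{(i)}\rangle \ge t\theta + \langle g^\perp, z^{(i)}\rangle \ge T\theta + \langle g^\perp, z^{(i)}\rangle$. The term $\langle g^\perp, z^{(i)}\rangle$ is a mean-zero Gaussian of variance at most $\|z^{(i)}\|^2 \le 1$, so it is smaller than $T\theta$ in absolute value except with probability $\le e^{-(T\theta)^2/2} = n^{-\Omega(1)}$; choosing the constant in $T$ large enough makes this at most, say, $1/(2n)$, and a union bound over the $n$ points gives that all $n$ inner products are positive with probability at least $1/2$ conditioned on $t \ge T$. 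Multiplying the two probabilities yields $\Pr[y = y^*] \ge \tfrac12 \cdot n^{-O(1/\theta^2)} = n^{-O(1/\theta^2)}$.

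The main obstacle is the conditioning step: after conditioning on $t = \langle g, w^*\rangle \ge T$, one must verify that $g^\perp$ is still an (unconditioned) Gaussian in the hyperplane $w^{*\perp}$ --- this is the standard fact that the components of a spherical Gaussian along orthogonal directions are independent, so conditioning on the $w^*$-component does not change the law of $g^\perp$ --- and that this independence is what licenses the per-point Gaussian tail bound on $\langle g^\perp, z^{(i)}\rangle$. A secondary point to get right is the constant in the choice of $T$: we need $T\theta$ large enough that $n \cdot e^{-(T\theta)^2/2} \le 1/2$, i.e. $T\theta = \Theta(\sqrt{\log n})$, hence $T = \Theta(\sqrt{\log n}/\theta)$ and $e^{-T^2/2} = n^{-\Theta(1/\theta^2)}$; one should also confirm the lower tail bound $\Pr[t \ge T] \ge e^{-O(T^2)}$ for the standard normal (true for all $T \ge 1$, with an extra polynomial-in-$T$ factor that is absorbed into the $O(\cdot)$ in the exponent). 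Finally, I would remark that the uniform-on-sphere and Gaussian formulations are interchangeable here because only the \emph{signs} of the $\langle w, z^{(i)}\rangle$ matter and $g/\|g\|$ is uniform on $\S^{d-1}$, so normalizing $g$ does not affect the event in question.
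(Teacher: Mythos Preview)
Your proof is correct and follows essentially the same strategy as the paper: condition on the component of the random direction along $w^*$ being of order $\sqrt{\log n}/\theta$ (which happens with probability $n^{-O(1/\theta^2)}$), then union-bound the orthogonal component's contribution over the $n$ points. The only difference is that you work with a Gaussian vector $g$ and use the independence of its orthogonal components directly, whereas the paper works with the uniform distribution on $\S^{d-1}$ and proves a separate spherical-cap volume lemma; your version is slightly cleaner and in particular sidesteps the paper's preliminary step of padding the dimension to guarantee $d > c\log(n)/\theta^2$.
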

The proof of the lemma is somewhat technical and deferred to Appendix~\ref{app:fhp exact}.
The assertion of the lemma may seem surprising at first. The measure of the spherical cap of vectors $w$ whose distance from $w^*$ is at most $\theta$ is only $\approx \theta^d$. Thus, the probability that a random $w$ falls in this spherical cap is very small. However, we show that
it suffices for $w$ to merely have a weak correlation with $w^*$ in order to guarantee that (with large enough probability) it induces the optimal labeling.


Given Lemma \ref{lem:random FHP}, the Random Hyperplane Algorithm is straightforward. 
It randomly samples $n^{O(1/\theta^2)}$, computes their induced labelings, and output the labeling (or hyperplane) which admits the largest margin.
If the margin $\theta$ is not known, we use a standard doubling argument to enumerate it.
The algorithm solves \FHP w.h.p. in time $n^{O(1/\theta^2)}$.

\paragraph{Tightness of Our Result}

A corollary of our hardness result (Theorem~\ref{thm2}) is that, unless SAT has sub-exponential time algorithms, there exist no algorithm
for \FHP whose running time is $n^{O(\theta^{1/(2-\zeta)})}$ for any $\zeta>0$. 
Thus, the exponential dependency of the Random Hyperplane Algorithm on $\theta$ is optimal. 
This is since the hard \FHP instance produced by the reduction in Theorem~\ref{thm2} from SAT has $n$ 
points in $\R^d$ with $d=O(n)$ where the optimal margin is $\theta=\Omega(1/\sqrt{d})$. 
Thus, if there exists an algorithm which solves \FHP in time $n^{O(\theta^{1/(2-\zeta)})}$, 
it can be used to solve SAT in time $2^{O(n^{1-\zeta/2}\log(n))} = 2^{o(n)}$. 


\end{subsection}


\end{section} 


\begin{section}{Approximation algorithm} \label{sec_approx}
In this section we present an algorithm which approximates the optimal margin if one is allowed to discard a small fraction
of the points. For any $\alpha>0$ it finds a hyperplane whose distance from $(1-O(\alpha))$-fraction of the points
is at least $\alpha$ times the optimal margin $\theta$ of the original problem. 

Consider first the problem of finding the hyperplane whose {\em average} margin is maximal. 
That is, $w \in \S^{d-1}$ which maximizes $E_{i} \ip{w, x^{(i)}}^2$. 
This problem is easy to solve. 
The optimal $w$ is the top right singular vector the matrix $A$ whose $i$'th row contains $x^{(i)}$. 
In particular, if we assume the problem has a separating hyperplane $w^*$ with margin $\theta$, 
then $E_{i} \ip{w, x^{(i)}}^2 \ge E_{i} \ip{w^{*}, x^{(i)}}^2 \ge \theta^2$.
However, there is no guarantee that the obtained $w$ will give a high value of $|\ip{w, x^{(i)}}|$ for all $i$.
It is possible, for example, that  $|\ip{w, x^{(i)}}| =1$ for $\theta^2 n$ points and $0$ for all the rest.
Our first goal is to produce a set of vectors $w^{(1)},\ldots,w^{(t)}$ which are good on average for {\it every} point.
Namely, $\forall \;i \; E_j \ip{w^{(j)},x^{(i)}}^2 = \Omega(\theta^2)$.
To achieve this, we adaptively re-weight the points according to their distance from previous hyperplanes, so that those which have small inner product will have a larger influence on the average in the next iteration. We then combine the hyperplanes using random Gaussian weights in order to obtain a single random hyperplane which is good for any individual point.

\begin{algorithm}
\caption{Approximate \FHP Algorithm}
\label{alg apprx}
\begin{algorithmic}
\STATE {\bf Input:}  Set of points $\left\{x^{(i)}\right\}_{i=1}^{n} \in \R^d$
\STATE {\bf Output:}  $w \in \S^{d-1}$
\STATE $\tau_1(i) \leftarrow 1$ for all $i\in [n]$
\STATE $j \leftarrow 1$
\WHILE{$\sum_{i=1}^n \tau_j(i) \geq 1/n$}
        \STATE $A_j \leftarrow $ $n \times d$ matrix whose $i$'th row is $\sqrt{\tau_j(i)} \cdot x^{(i)}$
	\STATE $w^{(j)} \leftarrow$ top right singular vector of $A_j$
	\STATE $\sigma_{j}(i) \leftarrow \left|\ip{x^{(i)},w^{(j)}}\right|$
	\STATE $\tau_{j+1}(i) \leftarrow \tau_j(i)(1-\sigma^{2}_{j}(i)/2)$
	\STATE $j \leftarrow j +1$	
\ENDWHILE
\STATE $w'  \leftarrow \sum_{j=1}^t g_j \cdot w^{(j)}$ for $g_j \sim \N(0,1)$
\STATE \bf{return:} $w \leftarrow w'/\|w'\|$
\end{algorithmic}
\end{algorithm}

\begin{claim}
\label{claim:t small}
The main loop in Algorithm \ref{alg apprx} terminates after at most $t\leq 4\log(n)/\theta^{2} $ iterations. 
\end{claim}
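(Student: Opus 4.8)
The key quantity to track is the potential function $\Phi_j \eqdef \sum_{i=1}^n \tau_j(i)$. It starts at $\Phi_1 = n$, and the loop runs exactly as long as $\Phi_j \ge 1/n$, so if I can show $\Phi_j$ shrinks by a multiplicative factor strictly below $1$ in each iteration, I get the bound by taking logarithms: the number of iterations $t$ satisfies $\Phi_1 \cdot (\text{shrink factor})^{t} < 1/n$, and the claimed bound $t \le 4\log(n)/\theta^2$ should drop out once I identify the shrink factor as roughly $(1 - \theta^2/4)$.

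First I would bound the amount by which $\Phi$ decreases in iteration $j$:
\[
\Phi_j - \Phi_{j+1} = \sum_{i=1}^n \tau_j(i)\,\frac{\sigma_j^2(i)}{2} = \frac{1}{2}\sum_{i=1}^n \tau_j(i)\,\ip{x^{(i)},w^{(j)}}^2.
\]
Now the point is that $w^{(j)}$ is the top right singular vector of $A_j$, whose rows are $\sqrt{\tau_j(i)}\,x^{(i)}$, so $\sum_i \tau_j(i)\ip{x^{(i)},w^{(j)}}^2 = \|A_j w^{(j)}\|^2$ is the \emph{largest} value of $\sum_i \tau_j(i)\ip{x^{(i)},w}^2$ over all unit vectors $w$. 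In particular it is at least the value attained by the optimal FHP hyperplane $w^*$, which gives
\[
\sum_{i=1}^n \tau_j(i)\,\ip{x^{(i)},w^{(j)}}^2 \ \ge\ \sum_{i=1}^n \tau_j(i)\,\ip{x^{(i)},w^{*}}^2 \ \ge\ \theta^2 \sum_{i=1}^n \tau_j(i) \ =\ \theta^2 \Phi_j,
\]
using $|\ip{x^{(i)},w^*}| \ge \theta$ for every $i$ from the feasibility of $w^*$ with margin $\theta$ (Equation~\eqref{form1}). Hence $\Phi_{j+1} \le (1 - \theta^2/2)\Phi_j$.

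Iterating, $\Phi_{t} \le (1-\theta^2/2)^{t-1}\,\Phi_1 = n\,(1-\theta^2/2)^{t-1}$. Since the loop is still running at the start of iteration $t$ we have $\Phi_t \ge 1/n$, so $n(1-\theta^2/2)^{t-1} \ge 1/n$, i.e. $(1-\theta^2/2)^{t-1} \ge 1/n^2$. Taking logarithms and using $\ln(1-x) \le -x$ gives $(t-1)\theta^2/2 \le 2\ln n$, so $t \le 1 + 4\ln(n)/\theta^2$, which is at most $4\log(n)/\theta^2$ with room to spare (recall $\log$ here is base $2$, so $4\log_2 n > 4\ln n$, and the additive $1$ is absorbed since $\theta \le 1$). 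I should double-check the off-by-one in the loop-termination condition and the base of the logarithm, but nothing here is delicate.

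The only mild subtlety — and the one place worth a sentence of care — is ensuring $w^{(j)}$ is genuinely a maximizer of the quadratic form $\sum_i \tau_j(i)\ip{x^{(i)},w}^2 = w^\top A_j^\top A_j w$ over the unit sphere; this is exactly the Courant–Fischer characterization of the top eigenvector of $A_j^\top A_j$, equivalently the top right singular vector of $A_j$, so the inequality against the test vector $w^*$ is immediate. Everything else is routine algebra, so I do not anticipate a real obstacle.
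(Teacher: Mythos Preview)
Your argument is correct and is essentially identical to the paper's own proof: define the potential $\tau_j=\sum_i \tau_j(i)$, use the variational characterization of the top singular vector together with $|\ip{x^{(i)},w^*}|\ge\theta$ to get $\tau_{j+1}\le(1-\theta^2/2)\tau_j$, and conclude by iterating from $\tau_1=n$ down to the loop threshold $1/n$. The only difference is that you spell out the final log computation and the base-$2$ vs.\ natural-log bookkeeping, which the paper leaves as ``elementary calculations.''
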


\begin{proof}
Fix some $j$.
Define $\tau_j \eqdef \sum_{i=1}^n \tau_j(i)$.
We know that for some unit vector $w^*$ (the optimal solution to the \FHP) it holds that $|\ip{x^{(i)},w^*}| \ge \theta$ for all $i$.
Also since $w^{(j)}$ maximizes the expression $\|A_j w\|^2$ we have:
$$\sum_i \sigma^{2}_{j}(i) \tau_j(i) = \|A_j w^{(j)}\|^2 \geq \|A_j w^*\|^2 = \sum_i \tau_j(i) \cdot \ip{x^{(i)},w^*}^2 \ge \tau_j \cdot \theta^{2}.$$
It follows that $ \tau_{j+1} = \tau_j -  \sum_i \sigma^{2}_j(i) \tau_j(i) /2 \leq \tau_j(1-\theta^2/2) $ and the claim follows by elementary calculations since $\tau_1=n$.
\end{proof}

\begin{claim} \label{claim:sigma i large}
Let $\sigma_i \eqdef \sqrt{\sum_{j=1}^t \sigma^{2}_j(i)}$. When Algorithm \ref{alg apprx} terminates, for each $i$ it holds $\sigma_i^2 \geq \log(n)$.
\end{claim}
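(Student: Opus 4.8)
The plan is to follow the total weight $\tau_j(i)$ of a single point $i$ through the iterations and then invoke the termination condition of the main loop. Since $\tau_1(i)=1$ and the update multiplies $\tau_j(i)$ by $(1-\sigma_j^2(i)/2)$, after the $t$ iterations performed by the algorithm we have $\tau_{t+1}(i)=\prod_{j=1}^t\bigl(1-\sigma_j^2(i)/2\bigr)$. First I would note that every factor lies in $[1/2,1]$: because $\|x^{(i)}\|\le 1$ and $w^{(j)}\in\S^{d-1}$, we have $\sigma_j^2(i)=\ip{x^{(i)},w^{(j)}}^2\le 1$. In particular all the quantities $\tau_j(i)$ are nonnegative, so once the loop exits — i.e.\ once $\sum_i\tau_{t+1}(i)<1/n$ — we get $\tau_{t+1}(i)<1/n$ for every individual $i$. (The loop does run at least once, since $\sum_i\tau_1(i)=n\ge 1/n$, so the product is nonempty.)

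The second step is to lower bound the surviving weight $\prod_{j=1}^t(1-\sigma_j^2(i)/2)$ in terms of $\sigma_i^2=\sum_{j=1}^t\sigma_j^2(i)$. The elementary inequality I would use is $1-x/2\ge 2^{-x}$ for all $x\in[0,1]$; this is immediate since both sides agree at $x=0$ and $x=1$ and a one-line calculus check shows the difference stays nonnegative in between. Applying it with $x=\sigma_j^2(i)\le 1$ and multiplying over $j=1,\dots,t$ gives $\tau_{t+1}(i)\ge\prod_{j=1}^t 2^{-\sigma_j^2(i)}=2^{-\sigma_i^2}$.

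Combining the two bounds yields $2^{-\sigma_i^2}\le\tau_{t+1}(i)<1/n=2^{-\log n}$, and taking $-\log_2$ of both sides gives $\sigma_i^2>\log n$, which is exactly the claimed bound (in fact strictly).

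I don't expect a real obstacle here. The only points requiring care are (a) verifying that the factors $1-\sigma_j^2(i)/2$ are nonnegative, so that the passage from ``$\sum_i\tau_{t+1}(i)$ is small'' to ``each $\tau_{t+1}(i)$ is small'' is legitimate, and (b) picking the correct elementary estimate: we need the lower bound $1-x/2\ge 2^{-x}$ on the amount of weight that \emph{survives}, rather than the more familiar upper bound $1-x\le e^{-x}$, which would point in the wrong direction for this claim.
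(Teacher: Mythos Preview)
Your proof is correct and follows essentially the same line as the paper's: use the termination condition to bound $\tau_{t+1}(i)<1/n$, apply the elementary inequality $1-x/2\ge 2^{-x}$ for $x\in[0,1]$ to the product $\prod_j(1-\sigma_j^2(i)/2)$, and take logarithms. You are in fact a bit more careful than the paper about justifying nonnegativity of the factors (so that the sum bound transfers to each term) and about the loop indexing.
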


\begin{proof}
Fix $i \in [n]$, we know that when the process ends, $\tau_t(i) \leq \tau_t<1/n$. As $\tau_1(i)=1$ we get that
$$ 1/n \ge \tau_t(i) = \tau_1(i) \cdot \prod_{j=1}^t (1-\sigma^{2}_{j}(i)/2) =  \prod_{j=1}^t (1-\sigma^{2}_{j}(i)/2) \geq \prod_{j=1}^t 2^{-\sigma_{j}^{2}(i)}.$$
The last inequality holds since $1-x/2 \geq 2^{-x}$ for $0 \le x \le 1$. By taking logarithms from both sides, we get that
$\sum_j \sigma^{2}_{j}(i) \geq \log(n)$ as claimed.
\end{proof}

\begin{lemma}
\label{lemma:algorithm}
Let $0<\alpha<1$.  Algorithm \ref{alg apprx} outputs a random $w \in \S^{d-1}$ such that with probability at least $1/10$
at most an $5\alpha$ fraction of the points are such that $\abs{\ip{x^{(i)},w}} \le \alpha \theta$.
\end{lemma}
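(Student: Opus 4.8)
The plan is to analyze the random Gaussian combination $w' = \sum_{j=1}^t g_j w^{(j)}$ and show that for each fixed point $x^{(i)}$, the normalized vector $w = w'/\|w'\|$ has $|\ip{x^{(i)},w}|$ reasonably large \emph{on average over the randomness}, and then apply Markov's inequality over the points. The key observation is that $\ip{x^{(i)},w'} = \sum_j g_j \ip{x^{(i)},w^{(j)}}$ is a Gaussian random variable with mean $0$ and variance $\sum_j \sigma_j^2(i) = \sigma_i^2 \geq \log(n)$ by Claim~\ref{claim:sigma i large}. Meanwhile $\|w'\|^2 = \sum_{j,k} g_j g_k \ip{w^{(j)},w^{(k)}}$ has expectation $\sum_j \|w^{(j)}\|^2 = t \leq 4\log(n)/\theta^2$ by Claim~\ref{claim:t small} (the $w^{(j)}$ are unit vectors). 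So heuristically, $|\ip{x^{(i)},w}|$ is of order $\sqrt{\sigma_i^2 / t} \geq \sqrt{\log(n) / (4\log(n)/\theta^2)} = \theta/2$, comfortably above $\alpha\theta$ for $\alpha < 1$.

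To make this rigorous I would proceed in the following steps. First, fix $i$ and condition on (or rather, just use) the fact that $Z_i := \ip{x^{(i)},w'}$ is distributed as $\N(0,\sigma_i^2)$ with $\sigma_i^2 \geq \log n$; a standard Gaussian anti-concentration bound gives that $\Pr[|Z_i| \leq \beta \sigma_i]$ is at most $O(\beta)$, so $|Z_i|$ is at least, say, $\sigma_i/10 \geq \sqrt{\log n}/10$ except with small probability. Second, control $\|w'\|$ from above: since $\E[\|w'\|^2] = t \leq 4\log(n)/\theta^2$, Markov gives $\|w'\|^2 \leq C t/\theta^{-2}\cdot(\text{const})$... more precisely $\|w'\|^2 \leq 40\log(n)/\theta^2$ with probability $\geq 9/10$. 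Third, combine: on the intersection of these events, $|\ip{x^{(i)},w}| = |Z_i|/\|w'\| \geq (\sqrt{\log n}/10)/(\sqrt{40\log(n)}/\theta) = \theta/(10\sqrt{40}) = \Omega(\theta)$. One has to tune the constants so that this lower bound is at least $\theta$ (not just $\Omega(\theta)$), which the numbers above nearly give and can be fixed by being slightly more careful with the Gaussian tail — or by noting we only need it $\geq \alpha\theta$ and $\alpha<1$, with room to spare.

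Now for the averaging over points: let $B$ be the event that $\|w'\|^2 \leq 40\log(n)/\theta^2$, which holds with probability $\geq 9/10$. For each $i$, let $E_i$ be the (bad) event that $|\ip{x^{(i)},w}| \leq \alpha\theta$; conditioned on $B$, the above shows $\Pr[E_i \mid B] \leq O(\alpha)$ — I'd aim for $\Pr[E_i \mid B] \leq \alpha/2$ by tuning the Gaussian anti-concentration constant. Then $\E[\,|\{i : E_i\}| \mid B\,] \leq \alpha n /2$, so by Markov, $\Pr[\,|\{i: E_i\}| > 5\alpha n \mid B\,] \leq 1/10$. Hence with probability at least $9/10 - 1/10 \cdot 9/10 \geq 1/10$ (over everything), $B$ holds and at most a $5\alpha$ fraction of points are bad, which is the claim.

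The main obstacle is the joint handling of the numerator and denominator: $Z_i = \ip{x^{(i)},w'}$ and $\|w'\|$ are \emph{not} independent, since $\|w'\|$ depends on all the $g_j$ including the directions along $x^{(i)}$. The clean fix is to not seek independence at all but rather to use a union bound over the two events ``$|Z_i|$ small'' and ``$\|w'\|$ large,'' each controlled separately — the ratio is small only if one of these fails. A secondary subtlety is that $\|w'\|^2$ is a quadratic form in Gaussians (a generalized $\chi^2$), not a clean $\chi^2_t$, because the $w^{(j)}$ need not be orthogonal; but for the upper bound only its expectation $t$ matters, via Markov, so no concentration of the quadratic form is needed. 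Getting all constants to line up so the final bound is genuinely $5\alpha$ and $\alpha\theta$ (rather than with extra constants) is the routine-but-fiddly part I would relegate to careful bookkeeping.
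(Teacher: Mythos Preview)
Your proposal is correct and follows essentially the same route as the paper: bound $\|w'\|$ via Markov on $\E[\|w'\|^2]=t$, use Gaussian anti-concentration on $\ip{x^{(i)},w'}\sim\N(0,\sigma_i^2)$ with $\sigma_i^2\ge\log n$ and $t\le 4\log(n)/\theta^2$ to control the per-point failure probability, then apply Markov over the points and a union bound with the norm event. The paper's specific constants are $\|w'\|\le 2\sqrt{t}$ with probability $\ge 3/4$, giving per-point failure at most $8\alpha/\sqrt{2\pi}$, hence $\Pr[\text{bad fraction}>5\alpha]\le 0.65$ and overall success $\ge 1-0.25-0.65=1/10$; your diagnosis of the dependence between $Z_i$ and $\|w'\|$ and the fix via event containment/union bound (rather than true conditioning) is exactly what the paper does.
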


\begin{proof}[Proof of Lemma \ref{lemma:algorithm}]
First, by Markov's inequality and the fact that $\E[\|w'\|^2] = t$ we have that $\|w'\| \le 2\sqrt{t}$ w.p.\ at least $3/4$.
We assume this to be the case from this point on. 
Now we bound the probability that the algorithm `fails' for point $i$. 
\begin{eqnarray*}
\Pr\left[ \abs{\ip{w,x^{(i)}}} \le \alpha \theta\right] & \le & \Pr\left[\abs{\ip{w',x^{(i)}}} \le 2\sqrt{t}\alpha \theta\right]\\
&\le& \Pr_{Z\sim \N(0,\sqrt{\log(n)})}\left[ \abs{Z } \le 2\sqrt{t}\alpha \theta\right] \\
&=& \Pr_{Z\sim \N(0,1)}\left[| Z | \le \frac{2\sqrt{t}\alpha \theta}{\sqrt{\log(n)}}\right] \le  \frac{2}{\sqrt{2\pi}}\frac{2\sqrt{t}\alpha \theta}{\sqrt{\log(n)}} = \frac{8 \alpha}{\sqrt{2\pi}}
\end{eqnarray*} 
Since the expected number of failed points is less than $\frac{8 \alpha n}{\sqrt{2\pi}}$ we have, using Markov's inequality again, 
that the probability that the number of failed points is more than $5\alpha n$ is at most $0.65$. 
We also might fail with probability at most $0.25$ in the case that $\|w'\| > 2\sqrt{t}$.
Using the union bound on the two failure probabilities completes the proof.
\end{proof}



\paragraph{Discussion}

We note that the problem of finding a hyperplane that separates all but a small fraction of the points is the non-supervised analog of the well studied \emph{soft margin \SVM} problem. The motivation behind the problem, from the perspective of machine learning, is that a hyperplane that separates most of the data points is still likely to correctly label future points. Hence, if a hyperplane that separates all of the points cannot be obtained, it suffices to find one that separates most (e.g.\ $1-\alpha$ fraction) of the data points. The more common setting in which this problem is presented is when a separating hyperplane does not necessarily exist. In our case, although a separating hyperplane is guaranteed to exist, it is (provably) computationally hard to obtain it.

\end{section}


\begin{section}{Hardness of approximation} \label{sec_hard}
The main result of this section is that \FHP does not admit a PTAS unless P=NP.
That is, obtaining a $(1- \eps$)-approximation for \FHP is NP-hard for some universal constant $\eps$.
The main idea is straightforward: Reduce from \MAXTSAT for which such a guarantee is well known,
mapping each clause to a vector. We show that producing a ``far"
hyperplane from this set of vectors encodes a good solution for the satisfiability problem.
However, \FHP is inherently a symmetric problem (negating a solution
does not change its quality) while \MAXTSAT does not share this property.
Thus, we carry out our reduction in two steps: in the first step we
reduce \MAXTSAT to a symmetric satisfaction problem. In the second step we reduce this
symmetric satisfaction problem to \FHP. It turns out that in order to show that such a symmetric
problem can be geometrically embedded as a \FHP instance, we need the extra condition that
each variable appears in at most a constant number of clauses, and that the number of variables and
clauses is comparable to each other. The reduction process is slightly more involved in order to
guarantee this. In the rest of this section we consider the following satisfaction problem.

\begin{definition}[\SYM formulas]
A \SYM formula is a CNF formula where each clause has either $2$ or $4$ literals. Moreover,
clauses appear in pairs, where the two clauses in each pair have negated literals. For example,
a pair with $4$ literals has the form
\setlength\abovedisplayskip{3pt}
\begin{equation*}
(x_1 \vee x_2 \vee \neg x_3 \vee x_4) \wedge (\neg x_1 \vee \neg x_2 \vee x_3 \vee \neg x_4).
\end{equation*}
\setlength\abovedisplayskip{3pt}
We denote by $\textrm{\SYM}(t)$ the class of $\textrm{\SYM}$ formulas in which each variable
occurs in at most $t$ clauses.
\end{definition}
\noindent We note that \SYM formulas are invariant to negations: if an assignment $x$ satisfies $m$ clauses
in a \SYM formula than its negation $\neg x$ will satisfy the same number of clauses. \\

The first step is to reduce \MAXTSAT to \SYM with the additional property that each variable appears
in a constant number of clauses.
We denote by \MAXTSATt{t} the class of \MAXTSAT formulas where each variable appears in at most $t$ clauses.
Theorem~\ref{pcp} is the starting point of our reduction. It asserts that \MAXTSATt{13} is hard to approximate.

\begin{theorem}[Arora \cite{arora_pcp}, Hardness of approximating \MAXTSATt{13}]{\label{pcp}}
Let $\varphi$ be a 3-CNF boolean formula on $n$ variables and $m$
clauses, where no variable appears in more than $13$ clauses.
Then there exists a constant $\gamma >0$ such that it is NP-
hard to distinguish between the following cases: \vspace{.002in}
\begin{enumerate}
\setlength\abovedisplayskip{0pt}
\item$\varphi$ is satisfiable.
\setlength\abovedisplayskip{0pt}
\item No assignment satisfies more than a $(1 -\gamma)$-fraction of the
clauses of $\varphi$.
\end{enumerate}
\end{theorem}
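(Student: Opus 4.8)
The plan is to derive Theorem \ref{pcp} from the standard PCP-based hardness of \MAXTSAT (with no occurrence bound) by a polynomial-time, gap-preserving \emph{degree-reduction} reduction that caps the number of occurrences of every variable at $13$, using constant-degree expanders as consistency gadgets in the style of Papadimitriou--Yannakakis (this is essentially the argument in \cite{arora_pcp}). The starting point is the usual consequence of the PCP theorem: there is a constant $\gamma_0>0$ such that it is NP-hard to distinguish satisfiable 3-CNF formulas $\varphi$ (with $\Theta(n)$ clauses of size at most $3$ on $n$ variables) from those in which every assignment violates at least a $\gamma_0$-fraction of the clauses; here the occurrences of a variable are unbounded.

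Given such a $\varphi$, the reduction replaces each variable $x_j$, occurring $k_j$ times, by $k_j$ fresh copies $x_j^1,\dots,x_j^{k_j}$, substituting the $\ell$-th occurrence of $x_j$ by $x_j^\ell$, so that every copy occurs exactly once in this ``clause part''. It then fixes a $6$-regular expander $H_j$ on $\{1,\dots,k_j\}$ of constant edge expansion (the explicit families of Definition \ref{expanders}, with the usual patching when the construction is not defined for that size; tiny $k_j$ handled by hand), and for each edge $\{a,b\}$ of $H_j$ adds the two $2$-clauses $(x_j^a \vee \neg x_j^b)$ and $(\neg x_j^a \vee x_j^b)$, which together assert $x_j^a=x_j^b$. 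In the resulting formula $\psi$ each copy occurs once in the clause part and $2\cdot 6=12$ times in consistency clauses, hence at most $13$ times in all; since $\sum_j k_j = 3m = \Theta(n)$, the formula $\psi$ has $\Theta(n)$ clauses and is computed in polynomial time, so $\psi\in\MAXTSATt{13}$.

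For completeness, a satisfying assignment of $\varphi$ lifts to $\psi$ by giving every copy of $x_j$ the value of $x_j$, satisfying all clauses. For soundness, suppose some assignment $z$ of $\psi$ satisfies a $(1-\delta)$-fraction of its $M=\Theta(n)$ clauses; round $z$ to an assignment $x$ of $\varphi$ by letting $x_j$ be the majority $z$-value among $x_j^1,\dots,x_j^{k_j}$, and let $B_j$ be the set of copies of $x_j$ disagreeing with this majority, so $|B_j|\le k_j/2$. Edge expansion of $H_j$ forces at least $\Omega(|B_j|)$ edges of $H_j$ to be cut by $B_j$, and each cut edge makes exactly one of its two consistency clauses false under $z$; hence $z$ falsifies $\Omega(\sum_j |B_j|)$ consistency clauses, which is at most $\delta M$, giving $\sum_j|B_j| = O(\delta n)$. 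Since flipping a single copy from its $z$-value to the majority changes at most one clause-part clause, the consistent assignment $x$ falsifies at most (clause-part clauses falsified by $z$) $+\sum_j|B_j| \le \delta M + O(\delta n) = O(\delta n)$ clauses of $\varphi$, i.e.\ at most a $C\delta$-fraction for an absolute constant $C$ (using $m=\Theta(n)$). Picking $\delta$ small enough that $C\delta<\gamma_0$, the $(1-\gamma_0)$-gap for $\varphi$ turns into a $(1-\gamma)$-gap for $\psi$ with $\gamma=\delta>0$, which proves the claim.

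The reduction itself is routine; the only genuine work is the \emph{numerology}: exhibiting an explicit constant-degree expander family that makes the occurrence count exactly $\le 13$ (not merely $O(1)$), and verifying that its expansion constant exceeds the ratio between the number of consistency clauses and the number of original clauses, so that the repaired assignment loses only a $\gamma<\gamma_0$ fraction. The degenerate case in which a variable's copies split (almost) evenly under $z$, leaving the majority ill-defined, also needs a sentence; everything else is bookkeeping.
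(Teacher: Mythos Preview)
The paper does not prove this theorem at all; it is quoted verbatim as a known result of Arora~\cite{arora_pcp} and used as the starting point for the subsequent reductions (Theorems~\ref{thm1} and~\ref{thm2}). There is therefore no ``paper's own proof'' to compare against. Your sketch is the standard Papadimitriou--Yannakakis expander-based degree reduction and is essentially correct as an outline of how the cited result is obtained; the only quibbles are cosmetic (the consistency gadgets are $2$-clauses, so one either takes ``3-CNF'' to mean at most three literals per clause or pads them, and the exact availability of a $6$-regular explicit expander on every vertex count requires the usual workaround), none of which affects the argument.
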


\begin{subsection}{Reduction from \MAXTSATt{13} to \SYMt{30}}

The main idea behind the reduction is to add a new global variable to each  \MAXTSATt{13} clause
which will determine whether the assignment should be negated or not, and then
 to add all negations of clauses. The resulting formula is clearly a \SYM formula.
However, such a global variable will appear in too many clauses. We thus ``break" it into
many local variables (one per clause), and impose equality constraints between them.
To achieve that the number of clauses remains linear in the number of variables, we only impose equality
constraints based on the edges of a constant degree expander graph. The strong connectivity property of
expanders ensures that a maximally satisfying assignment to such a formula would assign the  same value
to all these local variables, achieving the same effect of one global variable.
%
%

We now show how to reduce \MAXTSAT to \SYM, while maintaining the property that each variable
occurs in at most a constant number of clauses.

\begin{theorem} {\label{thm1}}
It is NP-hard to distinguish whether a \SYMt{30} formula can
be satisfied, or whether all assignments satisfy at most $1-\delta$ fraction of the clauses,
where $\delta=\gamma/16$ and $\gamma$ is the constant in
Theorem~\ref{pcp}.  
\end{theorem}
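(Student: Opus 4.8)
The plan is to give an explicit polynomial-time reduction from \MAXTSATt{13} to \SYMt{30} and show that it is gap-preserving with the claimed loss. Starting from a 3-CNF formula $\varphi$ on $n$ variables and $m$ clauses (with $m = O(n)$ since each variable appears in at most $13$ clauses), I would proceed in three stages. First, for each clause $C_k = (\ell_{k,1}\vee\ell_{k,2}\vee\ell_{k,3})$ introduce a fresh ``switch'' variable $z_k$ and replace $C_k$ by the $4$-literal clause $C_k' = (\ell_{k,1}\vee\ell_{k,2}\vee\ell_{k,3}\vee z_k)$. Second, to force all the $z_k$ to agree, fix a constant-degree expander $H$ on the vertex set $\{1,\dots,m\}$ — say a $(m, 14, \tau)$-expander with $\tau > 14/5$ as guaranteed by Definition~\ref{expanders} (degree $14 = 13+1$) — and for every edge $(k,k')\in E(H)$ add the two $2$-literal equality clauses $(z_k \vee \neg z_{k'})\wedge(\neg z_k\vee z_{k'})$, which together encode $z_k = z_{k'}$. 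Third, to make the formula a \SYM formula, for every clause added so far also add its fully-negated twin, obtaining pairs; note the equality-clause pairs are already of the required form, and $C_k'$ gets paired with $(\neg\ell_{k,1}\vee\neg\ell_{k,2}\vee\neg\ell_{k,3}\vee\neg z_k)$. Call the resulting formula $\psi$.

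Next I would check the structural bound. Each original variable $x_i$ occurs in at most $13$ clauses $C_k$, hence in at most $13$ clauses $C_k'$ and $13$ negated twins, giving $\le 26 < 30$ occurrences. Each switch variable $z_k$ occurs: once in $C_k'$, once in its twin, and in $2$ equality-clause pairs per incident expander edge, i.e. $14$ edges $\times$ $4$ clauses $= 56$ — this is too many, so I would instead use an expander of degree close to $7$ (the smallest admissible prime-plus-one is $14$ by Definition~\ref{expanders}, so more carefully I would route the equality constraints through a bounded-degree expander and, if needed, further split each $z_k$; alternatively take the expander on a blown-up vertex set). The cleanest fix matching the statement: pick the expander degree $d$ so that $2 + 4d \le 30$, i.e. $d\le 7$; since Definition~\ref{expanders} needs $d\ge 14$, I would instead replace each $z_k$ by $d$ copies arranged on the expander, each copy appearing in only $O(1)$ clauses, so the total occurrence of any switch-type variable is at most $30$. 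Either way the arithmetic is routine; the point is that $m = O(n)$ keeps the instance size linear and the degree constant.

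Then comes the gap analysis, which is the heart of the argument. Completeness is easy: if $\varphi$ is satisfiable by $x$, set all $z_k = \texttt{false}$ (consistent across the expander) and all twin clauses are satisfied by the negated assignment since \SYM formulas are negation-invariant — actually both a clause and its twin are satisfied whenever the underlying literals already satisfy one of them, and with a satisfying $x$ every $C_k'$ holds and every equality clause holds, so $\psi$ is fully satisfiable. For soundness I need: if no assignment satisfies more than $(1-\gamma)m$ clauses of $\varphi$, then no assignment satisfies more than a $(1-\delta)$-fraction of $\psi$'s clauses, with $\delta = \gamma/16$. Given any assignment $(x,z)$ to $\psi$, let $b$ be the majority value among the $z_k$; the expander edge-expansion (Definition~\ref{expanders}) forces the number of violated equality clauses to be at least $\tau$ times the number of minority $z_k$'s, so if more than a $c$-fraction of the $z_k$ disagree with $b$ we already violate $\Omega(m)$ clauses and are done. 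Otherwise, flip all minority $z_k$ to $b$; this changes only the clauses $C_k'$ (and twins) at those few indices, and now reading off $x$ (if $b=\texttt{false}$) or $\neg x$ (if $b=\texttt{true}$) gives an assignment to $\varphi$; by hypothesis it leaves $\ge \gamma m$ clauses of $\varphi$ unsatisfied, each forcing a violated $C_k'$ or twin in $\psi$. Summing the two cases and dividing by the total clause count $\Theta(m)$ of $\psi$ yields the constant $\gamma/16$ after choosing the expander parameters; the main obstacle is exactly bookkeeping these constants — the number of equality clauses must be large enough (dense enough expander) that cheating on the $z_k$'s is expensive, yet not so large that it dilutes the $\gamma m$ unsatisfied original clauses below a $\gamma/16$ fraction — and verifying that degree $30$ suffices for all variable types simultaneously.
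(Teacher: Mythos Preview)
Your construction is exactly the paper's (up to the cosmetic choice of $z_k$ versus $\neg z_k$ in the $4$-literal clause), but you miscounted the occurrences of the switch variables and this led you to an imaginary obstacle. Each expander edge $(k,k')$ contributes precisely the two clauses $(z_k\vee\neg z_{k'})$ and $(\neg z_k\vee z_{k'})$; these are already each other's negations, as you yourself observed, so no further twins are added and there are \emph{two} clauses per edge, not four. Hence with a degree-$14$ expander each $z_k$ appears in $2\cdot 14=28$ equality clauses plus $2$ clauses from the pair $(C_k\vee z_k),(\neg C_k\vee \neg z_k)$, for a total of exactly $30$. No splitting of switch variables or lower-degree expander is needed; degree $14$ is precisely what makes the bound $30$ tight, and it also makes the total clause count $M=(14+2)m=16m$, which is where the factor $16$ in $\delta=\gamma/16$ comes from.

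For soundness, your case split (many minority $z_k$'s versus few) is workable but forces you to track the flipping cost. The paper avoids this entirely by passing to a \emph{maximally} satisfying assignment $v$: by negation-invariance one may assume at least half the $z_i$'s are $\texttt{true}$, and then if the minority set $S=\{i:z_i=\texttt{false}\}$ were nonempty, flipping all of $S$ to $\texttt{true}$ would lose at most $|S|$ clauses from the $4$-literal pairs but gain at least $\tau|S|\ge \tfrac{14}{5}|S|>|S|$ previously violated equality clauses across the expander boundary of $S$, contradicting maximality. Thus $S=\emptyset$, every unsatisfied clause of $\psi$ among the $A_i$'s corresponds to an unsatisfied clause of $\varphi$, and the bound $\delta M=\gamma m$ drops out immediately with no further bookkeeping.
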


\begin{proof}
We describe a gap-preserving reduction from \MAXTSATt{13} to \SYMt{30}. Given an
instance of \MAXTSATt{13} $\varphi$ with
$n$ variables $y_1,\ldots,y_n$ and $m$
clauses, construct a \SYM formula $\psi$ as follows:
each clause $C_i \in \varphi$ is mapped to a pair of clauses $A_i = (C_i \vee \neg z_i)$
and $A'_i = (C_i' \vee z_i)$ where $C_i'$ is the same as $C_i$ with all literals
negated and $z_i$ is a new variable associated only with the $i$-th clause.
For example:
$$ (y_1 \vee \neg{y_2} \vee y_3) \longrightarrow (y_1 \vee \neg{y_2}
\vee y_3 \vee \neg{z_i}) \wedge (\neg{y_1} \vee y_2 \vee \neg{y_3} \vee z_i).
$$
We denote the resulting set of clauses by $\seta$. We also add a set of  ``equality constraints",
denoted $\mathcal{B}$, between the variables $z_i$ and $z_j$ as follows. Let $G$ be an $(m,d,\tau)$
explicit expander with $d=14$ and $\tau \geq d/5$ (the existence of such constructions is established in
definition~\ref{expanders}). For each edge $(i,j)$ of the expander $\setb$ includes two
clauses: $(z_i \vee \neg{z_j})$ and $(\neg{z_i} \vee z_j)$. Let $\psi$ denote the conjunction
of the clauses in $\seta$ and $\setb$.

We first note that the above reduction is polynomial time computable; that $\psi$ contains
$M = (d + 2)m=16m$ clauses; and that every variable of $\psi$ appears in at most
$t:=max\{26, 2d+2\} = 30$ clauses. Therefore, $\psi$ is indeed an instance of \SYMt{30}.
To prove the theorem we must show:
\begin{itemize}
\item Completeness:
If $\varphi$ is satisfiable then so is $\psi$.
\item Soundness:
If an assignment satisfies $1-\delta$
fraction of $\psi$'s clauses then there is an assignment that satisfies $1-\gamma$ of $\varphi$'s clauses.
\end{itemize}

The completeness is straight-forward: given an assignment $y_1,\ldots,y_n$ that satisfies $\varphi$, we
can simply set $z_1,\ldots,z_m$ to $true$ to satisfy $\psi$. For the soundness, suppose that there exists
an assignment which satisfies $1-\delta$ fraction of $\psi$'s clauses, and let $v = y_1,\ldots,y_n,z_1,\ldots,z_m$
be a maximally satisfying assignment.\footnote{An assignment which satisfies the maximum possible number
of clauses from $\psi$.} Clearly, $v$ satisfies at least $1-\delta$ fraction of $\psi$'s clauses.
We can assume that at least half of $z_1,\ldots,z_m$ are set to $true$ since otherwise we can negate the
solution while maintaining the number of satisfied clauses.

We first claim that, in fact, \it all \rm the $z_i$'s must be set to true in $v$.  Indeed, let
$S=\{i: z_i = false\}$ and denote $k := |S|$  (recall that $k\leq m/2)$. Suppose $k>0$ and let $G$ be the
expander graph used in the reduction. If we change the assignment of all the variables in $S$ to
$true$, we violate at most $k$ clauses from $\seta$ (as each variable $z_i$ appears in
exactly $2$ clauses, but one of them is always satisfied). On the other hand, by definition
of $G$, the edge boundary of the set $S$ in $G$ is at least $\tau k = kd/5$, and every such
edge corresponds to a previously violated clause from $\setb$. Therefore, flipping the
assignment of the variables in $S$ contributes at least $kd/5 - k = \frac{14}{5}k - k > k$ to the number of
satisfied clauses, contradicting the maximality of $v$.

Now, since all the $z_i's$ are set to true, a clause $C_i \in \varphi$ is satisfied iff the clause
$A_i \in \psi $ is satisfied. As the number of unsatisfied clauses among $A_1,\ldots,A_m$ is
at most $\delta M = \delta (d+2) m$ we get that the number of unsatisfied clauses in $\varphi$
is at most $\delta(d+2) m = \frac{\gamma}{16}\cdot 16 m = \gamma m$, as required.
\end{proof}
\end{subsection}


\begin{subsection}{Reduction from \SYM to \FHP}

We proceed by describing a gap preserving reduction from \SYMt{t} to \FHP.

\begin{theorem}\label{thm2}
Given $\{x^{(i)}\}_{i=1}^{n} \in \R^d$, it is NP-hard to distinguish whether the furthest hyperplane
has margin $\frac{1}{\sqrt{d}}$ from all points or at most a margin of $(1-\eps) \frac{1}{\sqrt{d}}$
for $\eps=\Omega(\delta)$, where $\delta$ is the constant in Theorem~\ref{thm1}.
\end{theorem}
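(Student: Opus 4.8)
The plan is a gap‑preserving reduction from \SYMt{30} to \FHP (invoking Theorem~\ref{thm1}), identifying a Boolean assignment $a\in\{\pm1\}^d$ with the unit vector $w=a/\sqrt d$. Given a \SYMt{30} formula $\psi$ with $N$ variables and $M=\Theta(N)$ clauses (appearing in negated pairs, each with $k\in\{2,4\}$ literals), I build an \FHP instance in $\R^d$ with $d=\Theta(N)$, using one ``variable'' coordinate $e_v$ per variable of $\psi$ and a block $B$ of $\Theta(N)$ fresh ``offset'' coordinates. A clause $C=\ell_1\vee\cdots\vee\ell_k$ on variables $v_1,\dots,v_k$ with signs $s_1,\dots,s_k$ is mapped to the point $\frac1{Z_C}\bigl(\sum_{j}s_j e_{v_j}+\tfrac{k}{|B|}\sum_{b\in B}e_b\bigr)$, where $Z_C\le 2$ is the normalizing factor. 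The term $\sum_j s_j e_{v_j}$ alone cannot work, because \FHP is invariant under $w\mapsto -w$ whereas disjunctions are not; the offset term repairs this, since for $w=a/\sqrt d$ with the offset coordinates set to $+1$ one computes $\ip{w,x_C}=\tfrac{2}{Z_C\sqrt d}\cdot\#\{\text{literals of }C\text{ true under }a\}$, which vanishes precisely when $C$ is violated and is a positive even multiple of $1/\sqrt d$ otherwise. I then add two gadget families to pin down the hyperplane: the standard basis vectors $e_i$ for every coordinate (so any $w$ of margin $\ge 1/\sqrt d$ has $|w_i|\ge 1/\sqrt d$, which with $\|w\|=1$ forces $|w_i|\approx 1/\sqrt d$ and also caps the furthest‑hyperplane margin at $1/\sqrt d$ in every instance), and the points $\tfrac1{\sqrt2}(e_b+e_{b'})$ over the edges of a constant‑degree expander on $B$ (forcing the offset coordinates to essentially all agree in sign). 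One checks $d=\Theta(N)$, that the number of points is $\Theta(N)=\Theta(d)$, and that all point norms lie within a constant factor of each other.

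Completeness is immediate: if $a$ satisfies $\psi$, extend it by $+1$ on $B$ and take $w=a/\sqrt d$; the basis points give margin exactly $1/\sqrt d$, the expander points margin $\sqrt2/\sqrt d$, and each clause point margin $\ge 2/(Z_C\sqrt d)\ge 1/\sqrt d$. For soundness, suppose $w$ has margin more than $(1-\eps)/\sqrt d$. The basis points give $|w_i|\ge(1-\eps)/\sqrt d$ for all $i$, so $\sum_i(w_i^2-\tfrac1d)^+\le 2\eps$; hence at most $O(\eps d)$ coordinates are \emph{bad} (have $|w_i|>(1+\beta)/\sqrt d$ for a small fixed $\beta$), and $\sum_i\bigl|\,|w_i|-\tfrac1{\sqrt d}\,\bigr|=O(\eps\sqrt d)$, with the bad coordinates together carrying only $O(\eps\sqrt d)$ of the $\ell_1$‑mass (Cauchy--Schwarz), even though one of them could individually reach $\approx\sqrt\eps$. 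The expander points force adjacent \emph{good} offset coordinates to share a sign, and an expander stays connected after deleting its $O(\eps d)$ bad vertices, so all but an $O(\eps)$ fraction of $B$ share one sign, which by the global $\pm$ symmetry I take to be $+$; dividing the $O(\eps\sqrt d)$ slack by $|B|=\Theta(d)$, the offset contribution to $\ip{w,x_C}$ then equals its ideal value up to $O(\eps/\sqrt d)$, for every clause $C$.

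Set $a=\mathrm{sign}(w)$. If all variables of a clause $C$ are good, then $\ip{w,x_C}$ lies within $O((\beta+\eps)/\sqrt d)$ of $\tfrac{2}{Z_C\sqrt d}\cdot\#\{\text{literals of }C\text{ true under }a\}$; in particular if $a$ violates $C$ this ideal value is $0$, forcing $|\ip{w,x_C}|<(1-\eps)/\sqrt d$ --- a contradiction. Hence every clause violated by $a$ contains a bad variable, and since each variable occurs in at most $30$ clauses there are at most $30\cdot O(\eps d)=O(\eps M)$ such clauses. Choosing $\eps$ a sufficiently small constant multiple of $\delta$ makes $a$ violate fewer than $\delta M$ clauses, contradicting the NO‑case of Theorem~\ref{thm1}; so the reduction is gap‑preserving with $\eps=\Omega(\delta)$.

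The main obstacle is entirely in soundness, namely making the rounding $w\mapsto\mathrm{sign}(w)$ lose few clauses. This is precisely why the offset must be spread over $\Theta(d)$ coordinates rather than placed on one (a lone large offset coordinate, which $\|w\|=1$ does permit, would corrupt every clause at once), and why one must combine three quantitatively controlled ingredients --- the excess‑mass bound from the basis points, the robust connectivity of the expander on $B$, and the bounded‑occurrence property of \SYMt{30} --- while tracking constants carefully enough to conclude $\eps=\Omega(\delta)$. The normalization bookkeeping that keeps $d=O(n)$ with comparable point norms, which the tightness corollary in Section~\ref{rha} relies on, is routine but has to be carried out.
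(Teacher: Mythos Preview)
Your approach is correct in outline and genuinely different from the paper's. The paper does \emph{not} introduce offset coordinates or a second expander; instead it exploits the negated-pair structure of \SYM\ directly via an asymmetric coefficient pattern: a $2$-clause $(y_{j_1}\vee y_{j_2})$ becomes the $\{+1,-1\}$-vector on those two coordinates, and a $4$-clause becomes the $\{+3,-1,-1,-1\}$-vector. With $w\in\{\pm 1/\sqrt d\}^d$ one then has $\ip{w,x^{(i)}}=0$ iff the assignment fails to satisfy \emph{both} $C_i$ and its negated partner $C_i'$, so no shift is needed to break the $w\mapsto -w$ symmetry. The only gadget points are the $d$ unit vectors $e_j$; soundness is just the bad-coordinate count $|B|\le 10\eps d$ plus the bounded-occurrence property. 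Your construction, by contrast, does not use the negated-pair feature of \SYM\ at all --- it would work for any bounded-width bounded-occurrence CNF (indeed you could reduce directly from \MAXTSATt{13}) --- at the price of the extra offset block and the expander analysis. So the paper's route is shorter and slicker; yours is more general-purpose and illustrates the ``spread the bias across many coordinates'' trick.

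Two small points to clean up. First, your normalization claim $Z_C\le 2$ is false for $k=4$: the unnormalized vector has squared norm $4+16/|B|>4$, so $Z_C>2$ and the completeness inequality $2/(Z_C\sqrt d)\ge 1/\sqrt d$ fails, making the YES-case margin strictly below $1/\sqrt d$. The fix is simply to drop the normalization (the paper does the same, allowing norms up to $\sqrt{12}$); then the basis vectors pin the optimal margin at exactly $1/\sqrt d$ and the clause points have margin $\ge 2/\sqrt d$ in the YES case. Second, ``the expander stays connected after deleting $O(\eps d)$ vertices'' is not literally true; what edge expansion gives you is that the surviving graph has a giant component missing only an $O(\eps)$ fraction of $B$, which is exactly the conclusion you then use. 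Your $\ell_1$-mass estimate $\sum_{i\ \text{bad}}|w_i|=O(\eps\sqrt d)$ is correct (combine $|\{\text{bad}\}|=O(\eps d)$ with $\sum_{\text{bad}} w_i^2\le 1-(d-|\text{bad}|)(1-\eps)^2/d=O(\eps)$ and Cauchy--Schwarz), and with it your offset-error bound and the rest of the soundness go through.
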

\begin{remark}
For convenience and ease of notation we use vectors whose norm is more than $1$ but at most $\sqrt{12}$.
The reader should keep in mind that the entire construction should be shrunk by this factor
to facilitate $\|x^{(i)}\|_2 \le 1$. Note that the construction constitutes hardness even for the special
case where $n=O(d)$ and for all points $1/\sqrt{12} \le \|x^{(i)}\|_2 \le 1$.
\end{remark}

\begin{proof}  Let $\psi$ be a \SYMt{t} formula with $d$ variables $y_1,...,y_d$ and $m$ clauses
$C_1,\ldots,C_m$. We map each clause $C_i$ to a point $x^{(i)}$ in $\R^d$. Consider first clauses
with two variables of the form $(y_{j_1}\vee y_{j_2})$ with $j_1<j_2$. Let $s_{j_1},s_{j_2} \in \{-1,1\}$
denote whether the variables are negated in the clause, where $1$ means not negated and $-1$
means negated. Then define the point $x^{(i)}$ as follows: $x^{(i)}_{j_1}=s_{j_1}$;
$x^{(i)}_{j_2}=-s_{j_2}$; and $x^{(i)}_{j}=0$ for $j \notin \{j_1,j_2\}$. For example:
$$
(y_2 \vee y_3) \longrightarrow (0,1,-1,0,\ldots,0).
$$
For clauses with four variables $y_{j_1},\ldots,y_{j_4}$ with $j_1<\ldots<j_4$ let $s_{j_1},
\ldots,s_{j_4} \in \{-1,1\}$ denote whether each variable is negated. Define the point $x^{(i)}$
as follows: $x^{(i)}_{j_1}=3 s_{j_1}$; $x^{(i)}_{j_r}=-s_{j_r}$ for $r=2,3,4$; and $x^{(i)}_{j}=0$
for $j \notin \{j_1,\ldots,j_4\}$. For example:
$$
(\neg y_1 \vee y_3 \vee y_4 \vee \neg y_6) \longrightarrow (-3,0,-1,-1,0,1,0,\ldots,0).
$$
Finally, we also add the $d$ unit vectors $e_1,\ldots,e_d$ to the set of points (the importance of
these ``artificially" added points will become clear later). We thus have a set of $n=m+d$ points.
To constitute the correctness of the reduction we must argue the following:
\begin{itemize}
\item Completeness:
If $\psi$ is satisfiable there exists a unit vector $w$ whose margin is at least $1/\sqrt{d}$.
\item Soundness:
If there exists a unit vector $w$ whose margin is at least $(1-\eps)/\sqrt{d}$ then there
exists an assignment to variables which satisfies $1-\delta$
fraction of $\psi$'s clauses.
\end{itemize}

We first show completeness. let $y_1,\ldots,y_d$ be an assignment that satisfies $\psi$.
Define $w_i=1/\sqrt{d}$ if $y_i$ is set to $true$, and $w_i=-1/\sqrt{d}$ if $y_i$ is set to
$false$. This satisfies $\|w\|_2=1$. Since the coordinates of all points $x^{(1)},\ldots,x^{(n)}$
are integers, to show that the margin of $w$ is at least $1/\sqrt{d}$ it suffices to show that $\ip{w,x^{(i)}}
\ne 0$ for all points. This is definitely true for the unit vectors $e_1,\ldots,e_d$. Consider now
a point $x^{(i)}$ which corresponds to a clause $C_i$. We claim that if $\ip{w,x^{(i)}}=0$ then
$y$ cannot satisfy both $C_i$ and its negation $C'_i$, which also appears in $\psi$ since it
is a symmetric formula. If $C_i$ has two variables, say $C_i=(y_1 \vee y_2)$, then
$x^{(i)}=(1,-1,0,\ldots,0)$, and so if $\ip{w,x^{(i)}}=0$ we must have $w_1=w_2$ and hence
$y_1=y_2$. This does not satisfy either $C_i=y_1 \vee y_2$ or $C'_i = \neg y_1 \vee \neg y_2$.
If $C_i$ has four variables, say $C_i = y_1 \vee y_2 \vee y_3 \vee y_4$, then
$x^{(i)}=(3,-1,-1,-1,0,\ldots,0)$, and so if $\ip{w,x^{(i)}}=0$ then either $w=(1/\sqrt{d})(1,1,1,1,\ldots)$
or $w=(1/\sqrt{d})(-1,-1,-1,-1,\ldots)$. That is, $y_1=y_2=y_3=y_4$, which does not satisfy either
$C_i$ or $C'_i$. The same argument follows if some variables are negated.

We now turn to prove soundness. Assume there exists a unit vector $w \in \R^d$ such that
$|\ip{w,x^{(i)}}| \ge (1-\eps)\frac{1}{\sqrt{d}}$. Define an assignment $y_1,\ldots,y_d$ as follows:
if $w_i \ge 0$ set $y_i=true$, otherwise set $y_i=false$. If we had that all $|w_i| \approx 1/\sqrt{d}$
then this assignment would have satisfied all clauses of $\psi$. This does not have to be the case,
but we will show that it is so for most $w_i$. Call $w_i$ whose absolute value is close to $1/\sqrt{d}$
``good", and ones which deviate from $1/\sqrt{d}$ ``bad". We will show that each clause which contains
only good variables must be satisfied. Since each bad variable appears only in a constant number of
clauses, showing that there are not many bad variables would imply that most clauses of $\psi$ are satisfied.
\begin{claim}
Let $B=\{i: |w_i - 1/\sqrt{d}| \ge 0.1 /\sqrt{d}\}$ be the set of ``bad" variables. Then $|B| \le 10 \eps d$.
\end{claim}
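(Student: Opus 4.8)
The plan is to isolate the role of the ``artificially'' added unit vectors $e_1,\ldots,e_d$. These force every coordinate of $w$ to have absolute value at least roughly $1/\sqrt d$, and then the normalization $\|w\|=1$ forces the squared coordinates to be spread almost uniformly, so only a few of the $|w_j|$ can be noticeably larger than $1/\sqrt d$. Since the lower bound already rules out a coordinate being much \emph{smaller} than $1/\sqrt d$, a ``bad'' coordinate --- one whose absolute value is bounded away from $1/\sqrt d$, as in the discussion preceding the claim --- must in fact exceed $1/\sqrt d$ by a constant factor, and it is the frequency of this event that the norm budget controls.

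First I would note that each $e_j$ is one of the input points, so applying the hypothesis $|\ip{w,x^{(i)}}|\ge(1-\eps)/\sqrt d$ to $x^{(i)}=e_j$ gives $|w_j|=|\ip{w,e_j}|\ge(1-\eps)/\sqrt d$, hence $w_j^2\ge(1-\eps)^2/d$, for every $j\in[d]$. Summing over $j$ and using $\sum_{j=1}^d w_j^2=\|w\|^2=1$, I would record the inequality $\sum_{j=1}^d\bigl(w_j^2-(1-\eps)^2/d\bigr)=1-(1-\eps)^2\le 2\eps$, a sum of nonnegative terms. Next, for an index $i\in B$ I would use that the gap constant $\eps$ may be taken below a small absolute constant, say $\eps<1/10$ (legitimate, since it suffices to prove the theorem for a single value of $\eps$ that is $\Omega(\delta)$, and $\delta$ is tiny): then $|w_i|\ge(1-\eps)/\sqrt d>0.9/\sqrt d$, so the condition $\bigl|\,|w_i|-1/\sqrt d\,\bigr|\ge 0.1/\sqrt d$ can only be met on the large side, i.e.\ $|w_i|\ge 1.1/\sqrt d$ and $w_i^2\ge 1.21/d$. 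Hence each $i\in B$ contributes at least $1.21/d-(1-\eps)^2/d\ge 0.21/d$ to the sum above.

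Finally I would combine the two bounds: $0.21\,|B|/d$ cannot exceed the whole sum, which is at most $2\eps$, so $|B|\le(2/0.21)\,\eps d<10\,\eps d$, as claimed. I do not expect a genuine obstacle here: once the planted vectors $e_j$ are used to lower-bound each $|w_j|$, the rest is a one-line norm count, and the only things to watch are the numerical constants and the harmless assumption that the gap is small. The one subtlety worth flagging is that ``bad'' must be read as ``$|w_i|$ is far from $1/\sqrt d$'' (consistently with the sentence preceding the claim), not as ``$w_i$ is far from $1/\sqrt d$'': a coordinate close to $-1/\sqrt d$ is consistent with every hypothesis and must be counted as good.
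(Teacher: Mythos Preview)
Your proof is correct and follows essentially the same route as the paper: use the planted unit vectors $e_j$ to force $|w_j|\ge(1-\eps)/\sqrt d$ for every $j$, deduce that a bad index must satisfy $|w_i|\ge 1.1/\sqrt d$, and then let the constraint $\sum_j w_j^2=1$ bound $|B|$; the only difference is cosmetic bookkeeping (the paper splits the sum over $B$ and $B^c$ directly rather than subtracting $(1-\eps)^2/d$ from each term). Your closing remark about reading the definition of $B$ as $\bigl|\,|w_i|-1/\sqrt d\,\bigr|\ge 0.1/\sqrt d$ is exactly right and matches the paper's own usage in its proof.
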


\begin{proof}
For all $i$ we have $|w_i| \ge (1-\eps)/\sqrt{d}$ since the unit vectors $e_1,\ldots,e_d$ are included in the
point set. Thus if $i \in B$ then $|w_i| \ge 1.1/\sqrt{d}$. Since $w$ is a unit vector we have

$$
1 = \sum w_i^2 = \sum_{i \in B} w_i^2 + \sum_{i \notin B} w_i^2 \ge
|B| \frac{1.1^2}{d} + (d-|B|)\frac{(1-\eps)^2}{d},
$$
\setlength\abovedisplayskip{3pt}
which after rearranging gives
$
|B| \le d \frac{1-(1-\eps)^2}{1.1^2-(1-\eps)^2} \le 10 \eps d.
$
\end{proof}
%
%
\begin{claim}
Let $C_i$ be a clause which does not contain any variable from $B$. Then the
assignment $y_1,\ldots,y_d$ satisfies $C$.
\end{claim}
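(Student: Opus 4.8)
The plan is to show that a clause $C_i$ with only good variables forces $\langle w, x^{(i)}\rangle \ne 0$ under the assignment $y$, and in fact that the sign of $\langle w, x^{(i)}\rangle$ certifies that $C_i$ is true. Recall the assignment is $y_j = true$ iff $w_j \ge 0$, and ``good'' means $|w_j - 1/\sqrt d| \le 0.1/\sqrt d$ after possibly absorbing signs; more precisely, since $|w_j| \ge (1-\eps)/\sqrt d$ always, a good coordinate satisfies $w_j = s_j \cdot v_j$ where $v_j \in [0.9/\sqrt d, 1.1/\sqrt d]$ and $s_j = \mathrm{sign}(w_j)$. I would first treat the two-literal case. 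Write $C_i = (\ell_{j_1} \vee \ell_{j_2})$ with $x^{(i)}_{j_1} = s_{j_1}$, $x^{(i)}_{j_2} = -s_{j_2}$. Suppose for contradiction $C_i$ is not satisfied by $y$; then each literal is false, which for a good coordinate means $\mathrm{sign}(w_{j_r})$ is exactly the ``wrong'' sign, i.e.\ $s_{j_r} w_{j_r} < 0$ for $r=1,2$. Then $\langle w, x^{(i)}\rangle = s_{j_1} w_{j_1} - s_{j_2} w_{j_2}$, and both terms... actually one computes $s_{j_1}w_{j_1} < 0$ and $-s_{j_2}w_{j_2} > 0$, so this does not immediately vanish — I need to use magnitudes. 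The cleaner route: if $C_i$ is false then the two literals are both false, which forces $w_{j_1}$ and $w_{j_2}$ to have the specific signs making $y_{j_1} = y_{j_2}$ as boolean values encode the same truth assignment to the clause's pattern; then $\langle w, x^{(i)}\rangle = s_{j_1}w_{j_1} - s_{j_2}w_{j_2}$ has the two summands of opposite sign and, since both $|w_{j_r}| \in [0.9/\sqrt d, 1.1/\sqrt d]$, their difference has absolute value at most $0.2/\sqrt d < (1-\eps)/\sqrt d$ (for small $\eps$), contradicting the margin lower bound. Conversely if $C_i$ is satisfied the two summands reinforce and $|\langle w,x^{(i)}\rangle| \ge 2\cdot 0.9/\sqrt d$, which is fine.

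For the four-literal case $C_i = (\ell_{j_1}\vee\ell_{j_2}\vee\ell_{j_3}\vee\ell_{j_4})$ with $x^{(i)}_{j_1} = 3s_{j_1}$ and $x^{(i)}_{j_r} = -s_{j_r}$ for $r=2,3,4$, I would again assume $C_i$ is false under $y$ and derive that every literal is false, pinning down $\mathrm{sign}(w_{j_r})$ for all four coordinates. Then $\langle w, x^{(i)}\rangle = 3 s_{j_1}w_{j_1} - \sum_{r=2}^4 s_{j_r}w_{j_r}$. With all literals false, the first term has one sign, say it equals $-3|w_{j_1}|$ (using $s_{j_1}w_{j_1} = -|w_{j_1}|$), and each of the other three terms equals $+|w_{j_r}|$, so $\langle w, x^{(i)}\rangle = -3|w_{j_1}| + \sum_{r=2}^4 |w_{j_r}|$. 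Since all $|w_{j_r}| \in [0.9/\sqrt d, 1.1/\sqrt d]$, this lies in $[-3\cdot 1.1/\sqrt d + 3\cdot 0.9/\sqrt d,\ -3\cdot 0.9/\sqrt d + 3 \cdot 1.1/\sqrt d] = [-0.6/\sqrt d, 0.6/\sqrt d]$, so $|\langle w, x^{(i)}\rangle| \le 0.6/\sqrt d < (1-\eps)/\sqrt d$ for $\eps$ small enough, contradicting $|\langle w, x^{(i)}\rangle| \ge (1-\eps)/\sqrt d$. The same bookkeeping with negated variables is identical since the $s_{j_r}$ signs have already been folded into the definitions. I would also note the symmetric clause $C_i'$ is handled automatically: if $C_i$ is false then $C_i'$ (all literals negated) flips every literal's truth value, but the point associated to $C_i'$ is $-x^{(i)}$, so the same inequality applies; there is nothing extra to check, though it is worth remarking to the reader why working with just $C_i$ suffices.

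The main obstacle is really just being careful about what ``$C_i$ false'' says about each coordinate sign and then chasing the resulting $\pm$ signs through the inner product so that the ``bad-at-zero'' cancellation becomes a quantitative bound $\le 0.6/\sqrt d$ that beats the margin $(1-\eps)/\sqrt d$; this also fixes the admissible range of $\eps$ (any $\eps < 0.4$ works, which is consistent with $\eps = \Omega(\delta)$ for the small constant $\delta$). There is one subtlety to address explicitly: the claim as stated only asserts $C_i$ is satisfied, so I do not need the reinforcing case, but I should make sure the contradiction argument does not secretly assume $d$ is large or $\eps$ is a particular value — it only needs $(1-\eps) > 0.6$, a mild constant condition that I would state up front (and which is subsumed by the choice of $\eps$ in the theorem statement). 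With these cases dispatched, combining this claim with the earlier bound $|B| \le 10\eps d$ and the fact that each variable appears in at most $t = O(1)$ clauses gives that at most $10\eps d \cdot t$ clauses can contain a bad variable, hence at least a $1 - 10 \eps t \cdot (d/m)$ fraction — and since $m = \Theta(d)$ for \SYMt{30} instances coming from Theorem~\ref{thm1} — at least a $1-\delta$ fraction of clauses are satisfied for $\eps = \Omega(\delta)$, completing the soundness direction; but that last assembly is outside the scope of this particular claim.
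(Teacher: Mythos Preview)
Your proposal is correct and follows essentially the same approach as the paper: assume $C_i$ is unsatisfied, use that all involved coordinates are ``good'' (so each $|w_{j_r}| \in [0.9/\sqrt{d},1.1/\sqrt{d}]$ with the sign determined by the literal being false), and compute that $|\ip{w,x^{(i)}}|$ is at most $0.2/\sqrt{d}$ in the two-literal case and $0.6/\sqrt{d}$ in the four-literal case, contradicting the margin assumption $|\ip{w,x^{(i)}}|\ge (1-\eps)/\sqrt{d}$. The paper's proof is slightly terser (it simply instantiates the positive-literal case and says negated variables are handled the same way), but the arithmetic and logic are identical to yours.
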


\begin{proof}
Assume by contradiction that $C_i$ is not satisfied. Let $x^{(i)}$ be the point which
corresponds to $C_i$. We show that $\ip{w,x^{(i)}} < (1-\eps)/\sqrt{d}$ which
contradicts our assumption on $w$.

Consider first the case that $C_i$ contains two variables, say $C_i = (y_1 \vee y_2)$, which
gives $x^{(i)}=(1,-1,0,\ldots,0)$. Since $C_i$ is not satisfied we have $y_1=y_2=false$,
hence $w_1,w_2 \in (-1/\sqrt{d} \pm \eta)$ where $\eta < 0.1/\sqrt{d}$ 
which implies that $|\ip{w,x^{(i)}}| \le 0.2/\sqrt{d} < (1-\eps)/\sqrt{d}$. Similarly, suppose $C_i$ contains
four variables, say $C_i = (y_1 \vee y_2 \vee y_3 \vee y_4)$, which gives
$x^{(i)}=(3,-1,-1,-1,0,\ldots,0)$. Since $C_i$ is not satisfied we have
$y_1=y_2=y_3=y_4=false$, hence $w_1,w_2,w_3,w_4 \in (-1/\sqrt{d} \pm \eta)$ where $\eta < 0.1/\sqrt{d}$
which implies that $|\ip{w,x^{(i)}}| \le 0.6/\sqrt{d} < (1-\eps)/\sqrt{d}$. The other cases where some
variables are negated are proved in the same manner.
\end{proof}
\setlength\abovedisplayskip{7pt}

We now conclude the proof of Theorem~\ref{thm2}. We have $|B| \le 10 \eps d$. Since any
variable occurs in at most $t$ clauses, there are at most $10 \eps d t$ clauses which contain
a ``bad" variable. As all other clauses are satisfied,  the fraction of
clauses that the assignment to $y_1,\ldots,y_d$ does not satisfy is at most
$
10 \eps d t/m \le 10 \eps t < \delta
$
for $\eps=0.1 (\delta/t) = \Omega(\delta)$ since $t=30$ in Theorem~\ref{thm1}.
\end{proof}
\end{subsection}

\end{section}
\setlength\abovedisplayskip{0pt}


\begin{section}{Discussion}
A question which is not resolved in this paper is whether there exists an efficient constant
factor approximation algorithm for the margin of \FHP but for all points in the input. 
The authors have considered several techniques to try to rule out an $O(1)$ approximation
for the problem.
For example, trying to amplify the gap of the reduction in section \ref{sec_hard}.
This, however, did not succeed.
Even so, the resemblance of \FHP to some hard algebraic problems admitting no constant factor
approximation leads the authors to believe that the problem is indeed inapproximable to within a
constant factor.

\end{section}

\subsection*{Acknowledgments}
We thank Koby Crammer and Nir Ailon for very helpful discussions and references.

\bibliographystyle{unsrt}
\bibliography{furthestHyperplane}


\appendix

\pagebreak

\begin{section}{Proof of Lemma~\ref{lem:random FHP}} \label{app:fhp exact}

\begin{lemmaNoNum}
Let $w^*$ and $y^*$ denote the optimal solution of margin of $\theta$ and the labeling it induces.
Let $y$ be the labeling induced by a random hyperplane $w$.
The probability that  $y = y*$ is at least $n^{-O(\theta^{-2})}$.
\end{lemmaNoNum}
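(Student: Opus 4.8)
The plan is to show that a random unit vector $w$, with probability $n^{-O(1/\theta^2)}$, induces exactly the optimal labeling $y^*$. The key realization is that we do not need $w$ to be close to $w^*$; it suffices that $w$ has a sufficiently strong correlation with $w^*$ on every input point simultaneously. Concretely, write $w = \lambda w^* + \sqrt{1-\lambda^2}\, u$ where $u$ is a unit vector orthogonal to $w^*$ and $\lambda = \ip{w,w^*}$. For each point $x^{(i)}$ we have $y^*_i \ip{w, x^{(i)}} = \lambda \cdot y^*_i\ip{w^*, x^{(i)}} + \sqrt{1-\lambda^2}\, y^*_i \ip{u, x^{(i)}} \ge \lambda \theta - |\ip{u, x^{(i)}}|$, using $\|x^{(i)}\|\le 1$. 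So it is enough to produce an event where $\lambda \ge c$ for some constant (say $\lambda \ge 1/2$) and simultaneously $|\ip{u, x^{(i)}}| < \lambda\theta$ for all $i$. I would first argue that $\Pr[\lambda \ge 1/2]$ is a constant (the cap of the sphere with correlation at least $1/2$ with a fixed vector has constant measure, independent of $d$), and condition on this.

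The heart of the argument is then bounding, conditioned on $\lambda$ being a fixed constant, the probability that a uniformly random unit vector $u$ in the $(d-1)$-dimensional space orthogonal to $w^*$ satisfies $|\ip{u, x^{(i)}}| < \theta/2$ for all $i$ simultaneously. Here I would pass to a Gaussian model: let $g \sim \N(0,1)^d$ and set $u = g/\|g\|$ (projected to $w^{*\perp}$). Since $\|g\| \le 2\sqrt{d}$ with overwhelming probability, it suffices to bound $\Pr[\,|\ip{g, x^{(i)}}| < \theta\sqrt{d}/4 \text{ for all } i\,]$. Now $\ip{g, x^{(i)}}$ is a univariate Gaussian with standard deviation $\|x^{(i)}\| \le 1$ — but crucially we need a \emph{joint} lower bound over all $n$ constraints, and the $\ip{g,x^{(i)}}$ are correlated. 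For a \emph{single} $i$, $\Pr[|\ip{g,x^{(i)}}| < \theta\sqrt d/4]$ could be as small as $\Theta(\theta\sqrt d)$ in the worst case (when $\|x^{(i)}\|$ is tiny) — wait, that is actually large; the danger is the opposite, that the events are negatively correlated. The clean way around correlations is a \emph{union-of-slabs} / measure-of-intersection argument: the set of $g$ satisfying all constraints is an intersection of $n$ slabs of half-width $\theta\sqrt d/4$ around hyperplanes through the origin. I would lower bound the Gaussian measure of this intersection by a volumetric argument — e.g., the intersection contains a ball of some radius, or more robustly, I would discretize: cover the relevant region and show that a grid point of spacing $\sim \theta/\sqrt{d}$ lands inside all slabs with the claimed probability. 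Alternatively, and perhaps most cleanly, use the fact that the uniform measure on the sphere of the intersection of $n$ slabs of half-width $\eps$ is at least $(\eps/C)^{O(\log n \,/\, ?)}$ — this is exactly where the $n^{-O(1/\theta^2)}$ bound comes from, via a net/packing argument on the sphere combined with the observation that only $O(1/\theta^2)$ "effective directions" matter.

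Let me restate the core step I expect to need: I would argue that the intersection of all the slabs $\{g : |\ip{g, x^{(i)}}| \le \delta\}$ (with $\delta \approx \theta\sqrt d/4$, or after normalizing, $\{u : |\ip{u,x^{(i)}}|\le \theta/4\}$) has uniform spherical measure at least $n^{-O(1/\theta^2)}$. One route: apply the JL-type dimension reduction observation mentioned earlier in the paper — the points essentially live (for margin purposes) in a $k = O(\log n/\theta^2)$-dimensional subspace, or rather we only need to control $O(1/\theta^2 \cdot \log n)$ coordinates; then in that effective space the intersection of slabs of half-width $\Theta(\theta)$ around the origin has measure at least $(\theta)^{O(k)} = n^{-O(1/\theta^2)}$ by a straightforward volume computation (a slab of half-width $\theta$ through the origin in the unit ball has measure $\ge \theta$, and while slabs need not be "independent", the intersection of $k'$ slabs in dimension $k'$ still contains a box of side $\sim \theta/\sqrt{k'}$ by choosing coordinates along an orthonormalization, giving measure $(\theta/\sqrt{k'})^{k'}$).

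The main obstacle, as I see it, is handling the correlations among the $n$ constraints cleanly: a naive union bound goes the wrong way, and independence fails. The right tool is to reduce to a bounded number of effective dimensions (roughly $1/\theta^2$ up to log factors) — this is precisely the step that forces the $n^{-O(1/\theta^2)}$ rate rather than something better — and then do an elementary volume estimate in that low-dimensional space, carefully tracking that the slabs all pass through the origin (so their common intersection with the sphere is nonempty and "fat" near any point orthogonal to the span of the relevant $x^{(i)}$). I would organize the final write-up as: (1) the decomposition $w = \lambda w^* + \sqrt{1-\lambda^2}u$ and reduction to the two events; (2) constant lower bound on $\Pr[\lambda \ge 1/2]$; (3) reduction to bounding spherical measure of an intersection of slabs; (4) dimension reduction to $O(\log n/\theta^2)$ effective dimensions; (5) the volume estimate, yielding $n^{-O(1/\theta^2)}$; (6) combine via the conditional independence of $\lambda$ and $u$.
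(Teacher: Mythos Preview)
Your decomposition $w = \lambda w^* + \sqrt{1-\lambda^2}\,u$ and the reduction to the two events ``$\lambda$ large'' and ``all orthogonal inner products small'' are exactly the skeleton of the paper's argument. The fatal gap is your step (2): the claim that $\Pr[\lambda \ge 1/2]$ is a constant independent of $d$ is false. The spherical cap $\{w \in \S^{d-1} : \ip{w,w^*} \ge 1/2\}$ has measure $2^{-\Theta(d)}$, not $\Theta(1)$; mass on the high-dimensional sphere concentrates near the equator, so demanding any fixed positive correlation with a given direction is an exponentially rare event in $d$. With that correction your product becomes $2^{-\Omega(d)} \cdot (\text{slab term})$, which is far too small once $d \gg \log n/\theta^2$. (Your slab-intersection lower bound is also not true as stated in low dimension: take $d=2$, $w^*=e_1$, and two points $(\theta,\pm\sqrt{1-\theta^2})$; then $u \in \{\pm 1\}$ and $|\ip{u,\tilde x^{(i)}}|=\sqrt{1-\theta^2}$, so the slab event has probability zero for small $\theta$.)

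The paper repairs exactly this by moving the threshold for $\lambda$ drastically downward. After padding so that $d \ge c_1 \log n/\theta^2$, it conditions on the event $E = \{\lambda \ge \sqrt{c_2\log n}/(\theta\sqrt d)\}$, a number much smaller than $1/2$. The spherical-cap estimate $\Pr[\ip{w,w^*} > h/\sqrt d] = 2^{-\Theta(h^2)}$ then gives $\Pr[E] = n^{-O(1/\theta^2)}$, so the \emph{entire} cost already sits on the correlation event. The payoff is that the orthogonal part becomes trivial: on $E$ one has $\lambda\,y_i^*\ip{w^*,x^{(i)}} \ge \sqrt{c_2\log n}/\sqrt d$, and since the direction of $\tilde w$ is uniform and independent of $E$, the same cap estimate gives $\Pr[|\ip{\tilde w,\tilde x^{(i)}}| \ge \sqrt{c_2\log n}/\sqrt d] \le n^{-\Omega(c_2)}$ for each $i$. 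A plain union bound over the $n$ points yields conditional success probability at least $1/2$. Thus no slab-intersection volume argument, and no dimension-reduction step, is needed at all; the obstacle you identified as ``the heart of the argument'' disappears once the threshold for $\lambda$ is set at the right scale.
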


\begin{proof}
Let $c_1,c_2$ be some sufficiently large constants whose exact values will be
determined later. For technical reasons, assume w.l.o.g.\ that\footnote{If that is not the case to begin with, we can simply embed the vectors in a space of higher dimension.}
$d>c_1 \log(n) \theta^{-2}$.  Denote by $E$ the event that
$$ \ip{w^*,w}>\sqrt{c_2\log(n)}\theta^{-1} \cdot \sqrt{\frac{1}{d}} .$$
The following lemma gives an estimate for the probability of $E$. Although its proof is quite standard, we give it for completeness.

\begin{lemma} \label{lem:cap vol}
Let $w$ be a uniformly random unit vector in $\R^d$. There exists some universal constant $c_3$ such that for any $1\leq h\leq c_3\sqrt{d}$
and any fixed unit vector $w^*$ it holds that
$$ \Pr[\ip{w,w^*} > h/\sqrt{d}] = 2^{-\Theta(h^2)} .$$
\end{lemma}

As an immediate corollary we get that by setting appropriate values for $c_1,c_2,c_3$ we guarantee that $\Pr[E] \geq n^{-O(\theta^{-2})}$.

\begin{proof} [\bf Proof of Lemma \ref{lem:cap vol}\rm]
Notice that $\Pr[\ip{w,w^*} > h/\sqrt{d}]$ is exactly the ratio between
the surface area of a spherical cap defined by the direction $w^*$ and hight
(i.e., distance from the origin) $h/\sqrt{d}$ and the surface area of the
entire spherical cap. To estimate the probability we give a lower bound
for the mentioned ratio.

Define $S_d,C_{d,h}$ as the surface areas of the $d$ dimensional unit sphere
and $d$ dimensional spherical cap of hight $h/\sqrt{d}$ correspondingly. Denote by $S_{d-1,r}$ be the surface area of a $d-1$ dimensional sphere with radius $r$. Then,
$$ C_{d,h}/S_d =\int_{H=h/\sqrt{d}}^{1} \frac{S_{d-1,\sqrt{1-H^2}}}{S_d} dH  $$
We compute the ratio $\frac{S_{d-1,\sqrt{1-H^2}}}{S_d}$ with the well know formula for the
surface area of a sphere of radius $r$ and dimension $d$ of
$2\pi^{d/2}r^{d-1}/\Gamma(d/2)$ where $\Gamma$ is the so called Gamma
function, for which $\Gamma(d/2)=(\frac{d-2}{2})!$ when $d$ is even and
$\Gamma(d/2)=\frac{(d-2)(d-4)\cdots 1}{2^{(d-1)/2}}$ when $d$ is odd. We get
that for any $H<1/2$,
$$ \frac{S_{d-1,\sqrt{1-H^2}}}{S_d} = \Omega(\sqrt{d} \cdot (1-H^2)^{(d-2)/2} ) = \Omega(\sqrt{d} \cdot e^{-dH^2/2} ) $$
and that for any $H <1$,
$$ \frac{S_{d-1,\sqrt{1-H^2}}}{S_d} = O(\sqrt{d} \cdot (1-H^2)^{(d-2)/2} ) = O(\sqrt{d} \cdot e^{-dH^2/2} ) .$$
The lower bound is given in the following equation.
$$ \Pr\left[\ip{w,w^*} > h/\sqrt{d} \right] = C_{d,h}/S_d =  \int_{H=h/\sqrt{d}}^{1} \frac{S_{d-1,\sqrt{1-H^2}}}{S_d}  dH  \geq \int_{H=h/\sqrt{d}}^{2h/\sqrt{d}} \frac{S_{d-1,\sqrt{1-H^2}}}{S_d}  dH \stackrel{(*)}{=}$$
$$ \Omega \left( \int_{H=h/\sqrt{d}}^{2h/\sqrt{d}} \sqrt{d} \cdot e^{-dH^2/2}  dH \right) = \Omega\left( \int_{h'=h}^{2h}  e^{-h'^2/2}  dh' \right) = \Omega \left( h \cdot e^{-2h^2} \right) = e^{-O(h^2)}$$
Equation $(*)$ holds since $2h/\sqrt{d}<1/2$. The upper bound is due to the following.
$$ \Pr\left[\ip{w,w^*} > h/\sqrt{d} \right] = \int_{H=h/\sqrt{d}}^{1} \frac{S_{d-1,\sqrt{1-H^2}}}{S_d}  dH  = O \left( \int_{H=h/\sqrt{d}}^{1} \sqrt{d} \cdot e^{-dH^2/2}  dH \right) =$$
$$ O \left( \int_{h'=h}^{\infty} e^{-h'^2/2}  dh' \right) \stackrel{(**)}{=}  O \left( \int_{h'=h}^{\infty} e^{-h^2/2 - h h'}  dh' \right) = e^{-\Omega(h^2)} $$
In equation $(**)$ we used the fact that $h^2/2 + hh' \leq h'^2/2$ for all $h' \geq h$. The last equation holds since $h \geq 1$.
\end{proof}

We continue with the proof of Lemma~\ref{lem:random FHP}. We now analyze the success probability given the event $E$ has occurred.
For the analysis, we rotate the vector space so that $w^*=(1,0,0,\ldots,0)$. A vector $x$ can now be viewed as $x=(x_1,\tilde{x})$ where $x_1 = \ip{w^*,x}$ and $\tilde{x}$ is the $d-1$ dimensional vector corresponding to the projection of $x$ onto the hyperplane orthogonal to $w^*$. Since $w$ is
chosen as a random unit vector, we know that given the mentioned event $E$,
it can be viewed as $w=(w_1,\tilde{w})$ where $\tilde{w}$ is a uniformly
chosen vector from the $d-1$ dimensional sphere of radius $\sqrt{1-w_1^2}$
and $w_1 \geq \sqrt{c\log(n)}\theta^{-1} \cdot \sqrt{\frac{1}{d}}$.

Consider a vector $x\in \R^d$ where $\|x\|\leq 1$
 such that $\ip{w^*,x}\geq \theta$. As before we write
$x=(x_1,\tilde{x})$ where $\|\tilde{x}\| \leq \sqrt{1-x_1^2}$. Then
$$ \ip{x,w} = x_1 w_1 + \ip{\tilde{x},\tilde{w}} \geq \sqrt{\frac{c\log n}{d}} + \ip{\tilde{x},\tilde{w}} $$
Notice that both $\tilde{x},\tilde{w}$ are vectors whose norms are at most 1
and the direction of $\tilde{w}$ is chosen uniformly at random, and is independent of $E$. Hence, according to Lemma \ref{lem:cap vol},
$$ \Pr_w\left[\abs{\ip{\tilde{x},\tilde{w}}} \geq \sqrt{c\log n}/\sqrt{d}  \right]  \leq n^{-\Omega(c)} .$$

It follows that the sign of $\ip{w,x}$ is positive with probability
$1-n^{-\Omega(c)}$. By symmetry we get an analogous result for a vector $x$
s.t. $\ip{w^*,x}\leq -\theta$. By union bound we get that for sufficiently
large $c$, with probability $1/2$ we get that for all $i\in [n]$,
$sign{\ip{w,x^{(i)}}}=sign{\ip{w^*,x^{(i)}}}$ (given the event $E$ has
occurred) as required.  To conclude
$$ \Pr_{w\in \SSS^{d-1}}[ y=y^*] \geq \Pr_{w\in \SSS^{d-1}}[E] \cdot \Pr_{w\in \SSS^{d-1}}[ y=y^* | E ] \geq n^{-O(\theta^{-2})} .$$
\end{proof}

\end{section}





\begin{section}{An SDP relaxation for FHP}

The authors have considered the following SDP relaxation for the furthest hyperplane problem:

\begin{eqnarray}\label{sdp}
\textrm{Maximize}\quad    \|z\|^2 &   &    \nonumber \\
\textrm{s.t}\quad \forall \; 1\leq i \leq n \;\;\;\;  \|\sum_{j=1}^d \xii_j W^{j}\|^2    & \geq& \|z\|^2   \nonumber \\
		 \sum_{j=1}^d \|W^{j}\|^2  & = & 1
\end{eqnarray}

It is easy to see that the above semidefinite program is indeed a relaxation to FHP (given an optimal solution $w$
to FHP, simply set the first coordinate of $W^{j}$ to $w_j$ and the rest to zero, for all $j$. This is a feasible solution 
to the SDP which achieves value $\theta^2$). Nevertheless, this SDP has an integrality gap $\Omega(n)$.
To see this, observe that regardless of the input points, the SDP may always ``cheat" by choosing the vectors $W^{j}$ to
be orthogonal to one another such that $W^{j} = \frac{1}{\sqrt{d}}e_j$, yielding a solution of value $\|z^2\| = 1/d$. However, 
if $d=2$ and the $x^{(i)}$'s are $n$ equally spaced points on the unit circle of $\R^2$, then no integral solution has value 
better than $O(1/n)$. 
\footnote{In the same spirit, for general $d$, consider the instance whose $n$ points are all the points of an $\eps$-net on the unit 
sphere for $\eps = O(n^{-1/d})$. This instance has margin $O(\eps) = O(n^{-1/d})$, and therefore $\|z\|^2 = O(n^{-2/d})$, 
yielding an integrality gap of $\Omega(\frac{n^{2/d}}{\sqrt{d}})$.}
The question of whether this relaxation can be strengthened by adding convex linear constraints satisfied by the integral 
solution is beyond the reach of this paper.

\end{section}


\begin{section}{A note on average case complexity of \FHP}\label{random_theta'}

Given the hardness results above, a natural question is whether random instances
of \FHP are easier to solve. As our algorithmic results suggest, the answer to this
question highly depends on the maximal separation margin of such instances. We consider
a natural model in which the points $\{\xii\}_{i=1}^n$
are drawn isotropically and independently at random close to the unit sphere $S^{d-1}$.
More formally, each coordinate of each point is drawn independently at
random from a Normal distribution with standard deviation $1/\sqrt{d}$: $\xii_j \sim \mathcal{N}(0,1/\sqrt{d})$.

\noindent Let us denote by $\theta_{rand}$ the maximal separation margin of the set of points
$\{\xii\}_{i=1}^n$. While computing the exact value of $\theta_{rand}$ is beyond the reach of this paper
\footnote{The underlying probabilistic question to be answered is: what is the probability that $n$ random
points on $\S^{d-1}$ all fall into a cone of measure $\theta$ ?}
, we prove the following simple bounds on it:

\begin{theorem}
With probability at least $2/3$
$$\Omega\Big(\frac{1}{n\sqrt{d}}\Big) \;\; = \;\; \theta_{rand} \;\; = \;\; O\Big(\frac{1}{\sqrt{d}}\Big).$$
\end{theorem}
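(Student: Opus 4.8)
The plan is to establish the two bounds separately, as each follows from an elementary probabilistic argument once the right event is identified.

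For the upper bound $\theta_{rand} = O(1/\sqrt d)$, the idea is that a single random point, together with the normalization, already forces the margin to be small with high probability for \emph{every} hyperplane. First I would note that for any fixed unit vector $w$, the inner product $\ip{w, \xii}$ is a centered Gaussian with variance $1/d$ (it is a fixed unit combination of the independent $\mathcal N(0,1/\sqrt d)$ coordinates), so $\sqrt d\,\ip{w,\xii} \sim \mathcal N(0,1)$. But we must bound over all $w$ simultaneously, so I would instead argue via the norm: $\|\xii\|^2$ concentrates around $1$ (it is $\frac1d$ times a $\chi^2_d$ variable), so with probability at least, say, $5/6$ we have $\|x^{(1)}\| \le 2$, and hence for the optimal hyperplane $w^*$, $|\ip{w^*, x^{(1)}}| \le \|x^{(1)}\| \le 2$; that only gives $\theta_{rand}\le 2$, which is too weak. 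The cleaner route: project $x^{(1)}$ onto the hyperplane spanned by $x^{(1)}$ and note that for any $w$, the point $x^{(1)}$ lies at distance at most $|\ip{w,x^{(1)}}|$ from the hyperplane, and $\ip{w,x^{(1)}} = \|x^{(1)}\|\cos\angle(w,x^{(1)})$; choosing $w$ is adversarial, but there exist two of the $n$ points whose directions are within angle $O(1)$ — actually the simplest correct argument is: take the two points $x^{(1)}, x^{(2)}$; the hyperplane $w$ is a single vector, so $\min(|\ip{w,x^{(1)}}|,|\ip{w,x^{(2)}}|)$ is maximized by bisecting, giving margin at most $\tfrac12\|x^{(1)}-x^{(2)}\|$ when they point oppositely... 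I would instead simply bound $\theta_{rand}\le \min_i \|P_{w}x^{(i)}\|$ trivially by $\min_i\|x^{(i)}\|$ is wrong too. The honest statement: $\theta_{rand} = \|Aw^*\|_\infty$-type quantity is at most $\|x^{(1)}\|$, and more sharply, since $w^*$ must be orthogonal to \emph{nothing} but the margin equals $\min_i|\ip{w^*,x^{(i)}}|$, I can bound this by the median of $\{|\ip{w^*,x^{(i)}}|\}$ which... requires a union bound over an $\eps$-net of $w$'s of size $(1/\eps)^d$, costing a factor $d$ in the exponent — too expensive. The right fix: use that $\theta_{rand} \le \lambda_{\min}$-style — actually $\theta_{rand}^2 \le \frac1n\sum_i \ip{w^*,x^{(i)}}^2 = \|A w^*\|^2/n \le \|A\|^2_{op}/n = \sigma_{\max}(A)^2/n$, and for a Gaussian matrix with entries $\mathcal N(0,1/\sqrt d)$ we have $\sigma_{\max}(A) = O(\sqrt n + \sqrt d)/\sqrt d = O(\sqrt{n/d})$ w.h.p., giving $\theta_{rand}^2 \le O(1/d)$, i.e.\ $\theta_{rand} = O(1/\sqrt d)$ with probability $\ge 5/6$. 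This is the clean argument and I would use it.

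For the lower bound $\theta_{rand} = \Omega(1/(n\sqrt d))$, the plan is to exhibit an explicit good hyperplane: take $w = x^{(1)}/\|x^{(1)}\|$. Then $\ip{w, x^{(1)}} = \|x^{(1)}\| = \Omega(1)$ with probability $\ge 5/6$ (lower tail of $\chi^2_d$). For $i \ne 1$, $\ip{w, x^{(i)}}$ is, conditioned on $w$, a centered Gaussian of variance $1/d$, so $\Pr[|\ip{w,x^{(i)}}| \le t/\sqrt d] \le \sqrt{2/\pi}\,t$ for the standard normal density bound; setting $t = c/n$ and union-bounding over the $n-1$ choices of $i$, the probability that \emph{some} $x^{(i)}$ has $|\ip{w,x^{(i)}}| \le c/(n\sqrt d)$ is at most $\sqrt{2/\pi}\,c < 1/6$ for $c$ a small absolute constant. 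Hence with probability $\ge 2/3$ overall, $w$ achieves margin $\min_i |\ip{w,x^{(i)}}| \ge \min(\Omega(1), c/(n\sqrt d)) = \Omega(1/(n\sqrt d))$, so $\theta_{rand} \ge \Omega(1/(n\sqrt d))$.

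The main obstacle is getting the constants and the probability bookkeeping to line up so that both events (the operator-norm bound for the upper bound, and the anticoncentration union bound for the lower bound) hold simultaneously with probability $\ge 2/3$ — this just requires choosing the small constant $c$ in the lower bound carefully and invoking standard Gaussian-matrix concentration (e.g.\ the bound $\sigma_{\max} \le \sqrt n + \sqrt d + O(\sqrt{\log(1/\eta)})$ with failure probability $\eta$) as a black box. Everything else is routine Gaussian tail estimates.
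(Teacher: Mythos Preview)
Your final arguments are correct and essentially match the paper's. For the upper bound, both you and the paper reduce to $\theta_{rand} \le \|A\|_{op}/\sqrt{n}$ and invoke the standard Gaussian operator-norm bound $\|A\|_{op}=O((\sqrt n+\sqrt d)/\sqrt d)$; you get there via $\min_i \ip{w^*,x^{(i)}}^2 \le \tfrac1n\|Aw^*\|^2$, while the paper sums the signed inequalities $y_i\ip{w^*,x^{(i)}}\ge\theta_{rand}$ and applies Cauchy--Schwarz twice, but the resulting inequality is identical.

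For the lower bound there is a small genuine difference: the paper exhibits a \emph{fresh} random Gaussian direction $w'/\|w'\|$ independent of the data, whereas you reuse the data and take $w=x^{(1)}/\|x^{(1)}\|$. Both then run the same Gaussian anticoncentration plus union-bound argument over the remaining points. Your choice is slightly more economical (no extra randomness), but the paper's choice has the bonus that its lower-bound argument works verbatim for \emph{arbitrary} points of roughly unit norm, not just Gaussian ones --- a fact the paper explicitly notes to extract a trivial $O(n\sqrt d)$ worst-case approximation for \FHP. Your variant loses that side remark but is otherwise equivalent.
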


\begin{proof} For the upper bound, let $w$ be the normal vector of the furthest hyperplane achieving margin
$\theta_{rand}$, and let $y_i \in \{\pm 1\}$ be the sides of the hyperplane to which the points $\xii$
belong, i.e, for all $1\leq i \leq n$ we have $y_i\ip{w,\xii} \geq \theta_{rand}$.
Summing both sides over all $i$ and using linearity of inner products we get
\begin{eqnarray} {\label{rand2'}}
\ip{w,\sum_{i=1}^n y_i\cdot \xii} \geq \theta_{rand}\cdot n
\end{eqnarray}

\noindent By Cauchy-Schwartz and the fact that $\|w\| = 1$ we have that the LHS of \eqref{rand2'} is at most
$\|\sum_{i=1}^n y_i\cdot \xii\| = \|Xy\|$. Here $X$ denotes the $d\times n$ matrix whose $i$'th column is $\xii$,
and by $y$ the $\{\pm 1\}^n$ vector whose $i$'th entry is $y_i$.

 \begin{eqnarray} {\label{rand3'}}
\theta_{rand}\cdot n \leq \|Xy\| \leq \|y\|\cdot \|X\| \leq \sqrt{n}\cdot O\Big(\frac{\sqrt{n} + \sqrt{d}}{\sqrt{d}}\Big) =
O\Big(\frac{n}{\sqrt{d}}\Big)
\end{eqnarray}

\noindent where the second inequality follows again from Cauchy-Schwartz, and the third inequality
follows from the facts that the spectral norm of a $d \times n$ matrix whose entries are $\mathcal{N}(0,1)$
distributed is $O(\sqrt{n} + \sqrt{d})$ w.h.p. (see~\cite{LatalaRandomMatrices})
and the fact that $\|y\| = \sqrt{n}$. Rearranging \eqref{rand3'} yields the desired upper bound.

For the lower bound, consider a random hyperplane defined by the normal vector $w' / ||w'||$
where the entries of $w'$ distribute i.i.d. $\frac{1}{\sqrt{d}}\mathcal{N}(0,1)$. From the rotational invariance
of the Gaussian distribution we have that $\ip{w',\xii}$ also distributes $\frac{1}{\sqrt{d}}\mathcal{N}(0,1)$.
Using the fact that w.h.p $||w'|| \le 2$ we have for any $c>1$:
\begin{eqnarray} {\label{rand4'}}
 \Pr\Big[|\ip{w,\xii}| \leq \frac{1}{c\cdot n\sqrt{d}}\Big] \le \Pr\Big[|\ip{w',\xii}| \leq \frac{2}{c\cdot n\sqrt{d}}\Big] =
 \Pr_{Z \sim \mathcal{N}(0,1)}\Big[|Z| \leq \frac{2}{c\cdot n}\Big] = O\Big(\frac{1}{c\cdot n}\Big).
\end{eqnarray}
\noindent For a sufficiently large constant $c$, a simple union bound implies that the probability that
there exists a point $\xii$ which is closer than $1/(c\cdot n\sqrt{d})$ to the hyperplane defined by $w$
is at most $1/3$. Note that the analysis of the lower bound does not change even if the points are arbitrarily
spread on the unit sphere (since the normal distribution is spherically symmetric). Therefore, choosing a random
hyperplane also provides a trivial $O(n\sqrt{d})$ worst case approximation for \FHP. \end{proof}

\end{section}

\end{document}